\pdfoutput=1
\documentclass[journal]{IEEEtran}

\usepackage[utf8]{inputenc}
\usepackage[english]{babel}
\usepackage[T1]{fontenc}
\usepackage{amsmath,amssymb,amsfonts}
\usepackage{algorithmic}

\newtheorem{lemma}{Lemma}
\newtheorem{theorem}{Theorem}
\newtheorem{remark}{Remark}
\usepackage{graphicx}
\usepackage{braket}
\usepackage{textcomp}
\usepackage{xcolor}
\usepackage{cite}
\usepackage[hidelinks]{hyperref}

\begin{document}

\title{HI-QDC: An Isometric Modular Scaling for Hop-Independent Quantum Data Center}

\author{Mohadeseh~Azari, Anoosha~Fayyaz, Amy~Babay, David~Tipper, Prashant~Krishnamurthy, and Kaushik~Seshadreesan%
\thanks{The authors are with the Department of Informatics and Networked Systems, University of Pittsburgh, Pittsburgh, PA, USA. Corresponding author: Mohadeseh Azari (e-mail: moa125@pitt.edu).}%
}

\markboth{IEEE Journal,~Vol.~XX, No.~XX, 2026}%
{Azari \MakeLowercase{\textit{et al.}}: Hop-Independent Rate and Fidelity}

\maketitle

\begin{abstract}
Server-centric quantum data-center architectures offer scalability by distributing communication tasks across QPUs rather than concentrating complexity in a centralized switching core. However, scaling up such architectures increases the path length, the number of Bell-state measurements, and the loss of end-to-end fidelity. We ask whether the path diversity of a server-centric topology can be converted into a mechanism for preserving not only rate, but also fidelity.
We study this question through end-to-end purification as a fidelity-restoration mechanism. First, in a black-box network model, we determine the minimum number of raw end-to-end Werner-state copies, each distributed over a distance-$\ell$ path and therefore carrying the correspondingly degraded Werner parameter, that purification must consume to recover a single copy whose Werner parameter matches that of an elementary link. Then, comparing the result across recursive $2$-to-$1$ purification and optimized nested $r$-to-$1$ purification. Second, we instantiate these requirements in a probabilistic BCube architecture, since the edge-disjoint path diversity of this topology provides multiple raw copies that can be used for purification. Purification imposes a fundamental lower bound on the fidelity of its input states, and therefore no amount of path redundancy can raise the output fidelity below this threshold. This bound in turn limits how far the network can be scaled up, and to address it we present the Hop-Independent Quantum Data Center (HI-QDC) an isometric, modular, scalable architecture for quantum data centers, where purification transforms end-to-end entanglement distributed across a module into an effective link-level entanglement resource for the inter-module topology. Since BCube emerges as a strong topology candidate, we study the resource region of a HI-QDC-based architecture with BCube as its modular base topology.
Our results identify the regimes in which a BCube module, under end-to-end purification, preserves both the fidelity and the yield — the expected number of entanglement copies — of an elementary link. In this regime, the module hides its internal hop count and can be treated as an effective elementary link for a higher-level network. Thus, topology supplies the path multiplicity required for purification, while purification converts this multiplicity into fidelity recovery, enabling a modular route toward recursively scalable quantum data-center networks.
\end{abstract}

\begin{IEEEkeywords}
Entanglement distribution, quantum data center network, entanglement purification.
\end{IEEEkeywords}

\IEEEpeerreviewmaketitle

\section{Introduction}

Future quantum data center networks are emerging as a specialized sub-network of the global quantum internet~\cite{Kimble2008QuantumInternet,Wehner2018QuantumInternet,Cirac1999Distributed,Cacciapuoti2020QuantumInternetChallenges,Caleffi2024Survey}. Like any entanglement distribution network, they contend with the same fundamental obstacles, but at a different scale, with a different dominant bottleneck, and under a different performance priority, driving recent research to pivot toward data center focused architectures~\cite{Shapourian2025QDC,Cacciapuoti2025QDC}. In such systems, scalability depends not only on whether distant QPUs can be connected, but also on whether the delivered entangled states retain sufficient quality for distributed quantum tasks. This requirement creates an architectural tradeoff. Switch-centric optical fabrics can preserve relatively high end-to-end entanglement fidelity because communication paths require fewer entanglement-swapping operations. However, concentrating communication in a centralized switching core can make scaling difficult, since larger systems require increasingly capable core switches and more complex upgrades. Server-centric architectures take the opposite direction: QPUs participate actively in communication, reducing reliance on a deep centralized core and offering a more distributed path toward scalability~\cite{Guo2008DCell,Guo2009BCube,pouryousef2026benchmarking}. These benefits become especially relevant to future quantum data centers as quantum memories mature and QPUs can serve not only as endpoints but also as intermediate communication resources. However, as the system scales, the increasing reliance on intermediate QPUs can create a fundamental fidelity bottleneck.

Throughout this work, scaling refers to increasing the number of interconnected QPUs. We distinguish between two modes of topological scaling according to their requirements on switch radix and their effect on network diameter. We use scaling out to denote diameter-preserving growth, in which additional QPUs are accommodated within the same topological depth by increasing the switch radix. We use scaling up to denote diameter-growing expansion, in which additional levels, dimensions, or modules are constructed using switches of the same radix. This terminology differs from the conventional system-level distinction in classical data centers~\cite{vahdat2010scale}.

In server-centric architectures, multi-hop connections must traverse many intermediate QPUs and Bell-state measurements, so the fidelity of the distributed end-to-end entanglement degrades as the hop count grows. This issue becomes particularly important when the network must move beyond simple \emph{scaling out}. That is, a flat expansion cannot continue indefinitely without practical limits in the switching infrastructure, routing complexity, or physical organization of the system~\cite{azari2026hopindependentfidelityquantumdata}. At larger scales, some form of \emph{scaling up} or \emph{modular composition} becomes necessary. In server-centric topologies, this increases the network dimension, the diameter, and the number of swapping operations~\cite{Guo2009BCube}. Thus, the key question is whether the fidelity weakness of server-centric architectures can be compensated as they scale.

\begin{figure*}
    \centering
    \includegraphics[width=0.78\linewidth]{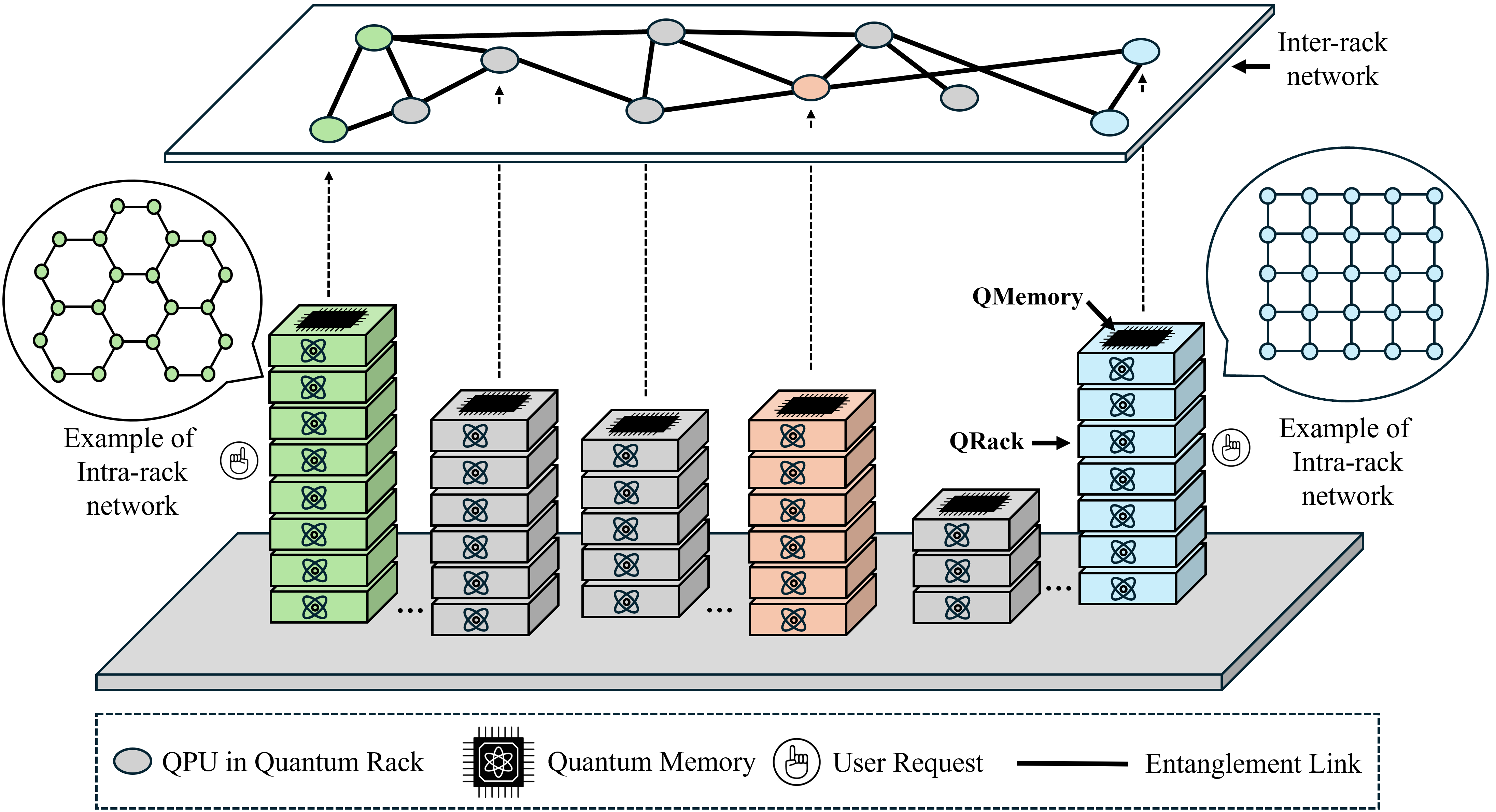}
    \caption{Quantum data-center network architecture considered in this work. The data center is organized into quantum racks, each containing multiple QPUs and quantum memories. Each rack may use a different intra-rack connectivity pattern, chosen according to the number of QPUs, the elementary-link quality, and the operating regime required for local end-to-end entanglement generation. Across racks, we consider a high-node-degree, server-centric inter-rack topology in which QPUs themselves participate in routing and forwarding. Inter-rack connection requests are routed through this topology, whose path diversity provides multiple edge-disjoint routes between the requested endpoints and lets the network generate several raw end-to-end entangled states. These states are stored in quantum memories and consumed by multi-copy purification, which compensates for the fidelity loss incurred during multi-hop entanglement swapping and raises the fidelity of the final shared state.}
    \label{fig:qdc_network}
\end{figure*}

Prior work has shown that network topology can improve the distribution rate of end-to-end entanglement through path diversity, and has characterized the parameter regions that secure such behavior using percolation theory, routing mechanisms, and multiplexing~\cite{Acin2007Percolation,Cuquet2009ComplexPercolation,Perseguers2013LargeScale,Pirandola2019Capacities,Pant2019Routing,Patil2021SpaceTimeMultiplexed,Patil2022DistanceIndependent,Kaur2023SquareGrid}. When a topology provides sufficient path diversity, distant node pairs may access many alternative end-to-end routes, allowing the supported entanglement distribution rate to avoid direct decay with network size. In favorable regimes, the total end-to-end rate can become comparable to, or even exceed, the elementary-link rate. These observations suggest that topology is not merely a connectivity design choice, but a communication resource. Nevertheless, rate resilience alone does not constitute scalability~\cite{Briegel1998Repeaters,Shi2020ConcurrentRouting,Li2022FidelityGuaranteedRouting,Abane2024RoutingSurvey,Kaur2023SquareGrid}. Even if many paths are available, each raw end-to-end state still suffers from fidelity loss accumulated through multi-hop swapping.

Motivated by the potential of topology as a network resource, we ask: can the topology of a server-centric quantum data-center network support not only a distance-independent entanglement distribution rate, but also the distribution of end-to-end entanglement at \emph{hop-independent fidelity}? In a quantum data center, the fidelity required by a connection is determined by the application it serves, not by the physical distance or hop count between the requested QPUs. Ideally, the network should therefore hide its internal path length and provide an abstraction in which QPU--QPU connections offer comparable delivered entanglement fidelity. We formalize this objective by using the elementary-link quality as the benchmark: the fidelity of an end-to-end entangled state distributed across many hops should be restored to that of a successfully generated elementary link, independent of the number of hops. This is a stronger architectural target than satisfying any simple, fixed, end-to-end fidelity threshold. A fixed threshold may be application-specific, but it does not make a multi-hop segment scalable across modules. By contrast, when the fidelity of an entangled state distributed across many hops is restored to that of an elementary link, the network hides its internal physical distance; thus, from the perspective of a higher-level protocol, the segment can be treated as a single effective link. 

BCube offers a natural setting to test whether topology can deliver hop-independent fidelity, because its server-centric structure provides growing path diversity. Multiple edge-disjoint paths can be used to generate several raw end-to-end entangled states between the same pair of requested QPUs. These copies can then be consumed by entanglement purification to compensate for the fidelity loss caused by multi-hop swapping~\cite{Bennett1996Purification,Krastanov2019OptimizedPurification,Jansen2022Enumerating,Victora2023PurificationNetworks,Goodenough2024GraphCodes}. In this view, topology supplies the raw copy multiplicity, while purification converts that multiplicity into fidelity recovery.

A flat BCube network cannot be scaled-up indefinitely by relying on path diversity alone: increasing the BCube dimension increases the number of available redundant paths, but it also lowers the fidelity of each raw end-to-end copy entering purification. Once the raw input fidelity falls below the useful purification regime, additional redundancy cannot by itself guarantee recovery of link-level quality. This motivates a modular architecture in which each module acts as a fidelity-restoration block~\cite{Cirac1999Distributed,Shapourian2025QDC,Cacciapuoti2025QDC,pouryousef2026benchmarking}. If the internal paths and purification resources of a module can produce an end-to-end entanglement with similar rate and fidelity of the elementary-link, then the module preserves the resource scalability. Our assumed model is probabilistic elementary-link generation and deterministic swapping, with fidelity presented through the Werner model; such a module can be abstracted as an effective elementary link for a higher-level network. Figure~\ref{fig:qdc_network} presents the main idea of the modular quantum data-center architecture considered in this work. As we argue later in the paper, we adopt the BCube topology as our candidate for both the intra-rack and inter-rack networks, because it provides as many edge-disjoint paths as the path length and is therefore well suited to obtaining the hop-independent property. What then remains is whether this topology can generate enough useful copies to restore link-level fidelity while preserving the desired entanglement generation rate~\cite{Victora2023PurificationNetworks,Chen2024OptimumPurificationScheduling,Xiao2024PurificationScheduling}.

The required number of copies depends strongly on the purification protocol. We therefore compare two purification mechanisms. The Bennett \emph{et al.} protocol provides a canonical recursive 2-to-1 baseline~\cite{Bennett1996Purification}. We also study the higher-order purification family of Jansen \emph{et al.}, which enables nested $r$-to-1 purification based on bilocal-Clifford operations and can outperform repeated 2-to-1 purification in relevant regimes~\cite{Jansen2022Enumerating,Goodenough2024GraphCodes}. This comparison allows us to determine whether multi-copy purification can reduce the resource cost of hop-independent fidelity, and how the topology-level requirements depend on purification efficiency.

Building on this analysis, we present HI-QDC---Hop-Independent Quantum Data Center architecture. Then, we instantiate these requirements in a probabilistic BCube architecture and characterize the regimes in which a module can preserve elementary-link fidelity with an end-to-end success rate high enough that the recursive scale depth remains limited. Finally, we strengthen this condition by requiring the module to preserve both elementary-link fidelity and elementary-link yield rate. Taken together, these regimes identify modular operating points in which a purified BCube module hides its internal hop count and serves as an effective link at the next architectural scale. The code is available at~\cite{Azari2026HIQDCRepository}.

The remainder of the paper is organized as follows. Section~\ref{sec:blackbox} presents the black-box purification analysis and derives the resource requirements for restoring elementary-link fidelity over multi-hop paths. Section~\ref{sec:Architecture} details the design of HI-QDC---the Hop-Independent Quantum Data Center architecture. Section~\ref{sec:topology-aware} introduces the topology-aware BCube analysis, including probabilistic link generation, multiplexing, and the topology resource condition for preserving both fidelity and yield rate. Section~\ref{sec:discussion} discusses the architectural implications, limitations, and scalability interpretation of the results. Section~\ref{sec:conclusion} concludes the paper.

\begin{figure}[t]
    \centering
    \includegraphics[width=0.95\linewidth]{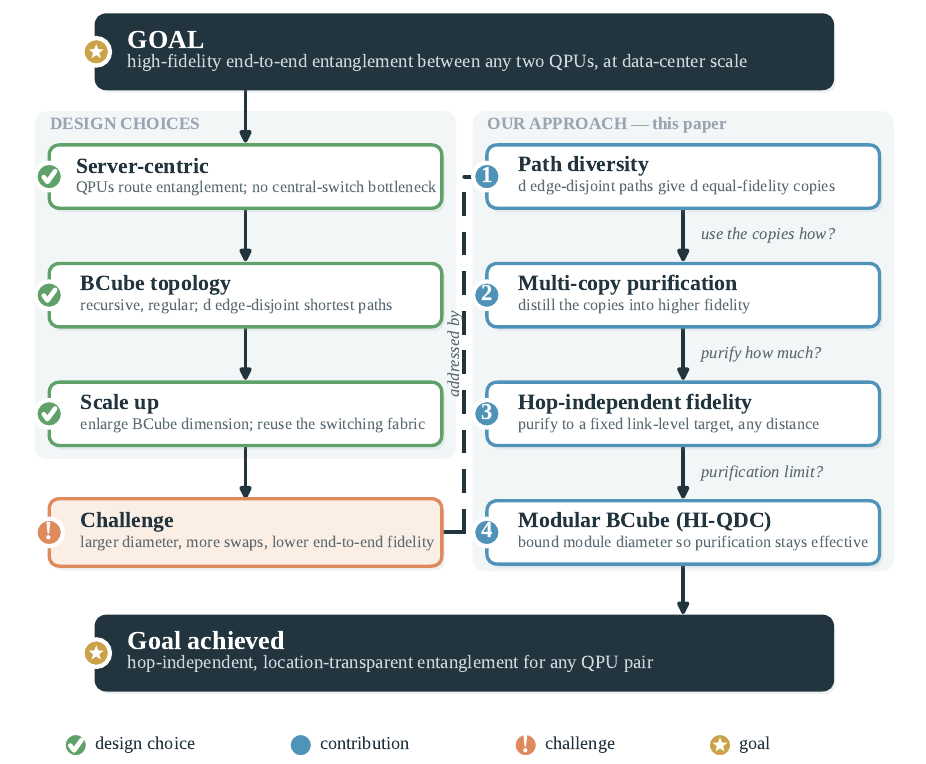}
    \caption{Road-map of the paper. A server-centric, recursive BCube fabric can be scaled to higher dimensions to provide data-center-scale connectivity, but this increased diameter lowers the raw end-to-end fidelity of cross-QPU connections. To overcome this limitation, we use the $d$ edge-disjoint shortest paths in BCube as $d$ equal-fidelity copies and apply multi-copy purification to restore the connection to a fixed link-level fidelity target (Section~\ref{sec:blackbox}). The HI-QDC organization (Section~\ref{sec:Architecture}) then bounds the diameter of each module, enabling hop-independent and location-transparent entanglement distribution between any pair of QPUs (Section~\ref{sec:topology-aware}).}
    \label{fig:three_stages}
\end{figure}

\section{Black-Box Analysis of Hop-Independent Fidelity}
\label{sec:blackbox}

In this stage, we isolate the purification requirement associated with multi-hop entanglement distribution before introducing any explicit network topology. Our goal is to determine whether the quality loss accumulated over a path can be reversed so that the resulting end-to-end state recovers the quality of a successfully established elementary link, a condition we refer to as hop-independent fidelity. This provides a topology-independent benchmark for the topology-aware analysis developed in section \ref{sec:topology-aware}.

\subsection{Motivation and Black-Box View}
\label{subsec:blackbox_motivation}

A central goal in a quantum data center network is to interconnect many quantum processing units (QPUs) in a way that remains useful as the network grows. As the network diameter increases, end-to-end entanglement must in general traverse a larger number of intermediate nodes, and the quality of the resulting end-to-end state degrades with hop count due to entanglement swapping at each intermediate node. Entanglement purification offers a possible way to reverse this degradation. However, for any fixed purification protocol and any fixed amount of link-level resources, there is a limit to how far this recovery can be sustained as the shortest-path distance increases. Understanding this tradeoff is therefore essential for assessing whether hop-independent fidelity is achievable at all, and at what resource cost.

This motivates the main question of this stage: can a multi-hop end-to-end entangled state be purified back to the quality of a successfully established elementary link, and if so, what resources are required? This question is shaped by several interacting factors, including the elementary-link quality, the number of hops, the purification protocol, and the number of copies available for purification. To understand the role of these factors more clearly, we first separate the purification problem from the topology problem. In this stage, we do not ask how a particular network topology generates multiple end-to-end copies, how often such copies occur, or how path redundancy depends on architecture. Instead, we adopt a black-box view of the network, in which the network is treated only as a provider of end-to-end entangled copies over a path of length $\ell$, and ask what purification alone can achieve with them. This abstraction isolates the purification requirement under the most favorable assumptions: if link-level fidelity cannot be recovered even in this setting, no topology can overcome that limitation. If recovery is feasible, the remaining question is whether a realistic topology can supply the required copies with sufficiently high probability, which we address in Section~\ref{sec:topology-aware}. In this sense, the black-box analysis establishes the purification requirements that any topology-aware solution must be able to meet.

\subsection{Model and Purification Framework}
\label{subsec:blackbox_model}

In the black-box setting, each successfully established elementary link is described by a Werner-state model, with Werner parameter $w \in [0,1]$,
\begin{equation}
\rho_W(w)= w \ket{\Phi^+}\bra{\Phi^+} + \frac{1-w}{4} I_4,
\label{eq:werner_state}
\end{equation}
where $\ket{\Phi^+}$ is a pure Bell state and $I_4$ is the $4 \times 4$ identity operator. Thus, the quality of an elementary link is fully characterized by a single parameter, with larger $w$ corresponding to higher entanglement quality. Throughout this analysis, the only source of imperfection is the initial Werner quality of the successfully generated elementary links. We do not incorporate additional noise from memory decoherence, storage-induced loss, or gate-level operational imperfections. This controlled setting is chosen deliberately so that the effect of hop accumulation can be isolated from other physical noise mechanisms.

We consider an end-to-end connection formed over a path of length $\ell$, where $\ell$ denotes the number of elementary links along the path between the two end QPUs. Since each hop is built from imperfect elementary links under ideal swapping, the Werner parameter of the raw end-to-end state before purification is $w_{\mathrm{e2e}}(\ell) = w^{\ell}$. The black box is therefore defined only through its input-output role: for a given elementary-link Werner parameter $w$ and path length $\ell$, it provides one or more raw end-to-end copies with Werner quality $w^{\ell}$, which purification then takes as its input resource. The success condition is that purification restores the end-to-end Werner parameter to at least $w^*$, where $w^* = w$ is the elementary-link Werner parameter, i.e., $w_{\mathrm{out}} \geq w^*$. This is precisely the condition of hop-independent fidelity: although the raw end-to-end quality decays with hop count as $w^{\ell}$, purification is considered successful if it restores the state to the link-level quality of the network regardless of path length. For a purification protocol that consumes $r$ input copies per round applied over $k$ nested recursion levels, meeting this target requires the existence of a fixed point satisfying $w_{\mathrm{out}}(w^*, r, k) \geq w^*$, ensuring that the protocol has a stable operating point at the target quality $w^*$.


To ensure that the conclusions are not tied to a single purification scheme, we consider two purification families. First, we use the Bennett \emph{et al.} protocol as the baseline, since it is the canonical 2-to-1 entanglement purification protocol and serves as the standard reference for recursive purification \cite{Bennett1996Purification}. Second, we consider the higher-order protocols of Jansen \emph{et al.}, which generalize purification to $r$-to-1 settings through bi-local Clifford distillation and were shown to outperform concatenated lower-order protocols in relevant operating regimes \cite{Jansen2022Enumerating}. The expression for the fidelity upgrade in terms of the Werner parameter $w$ and the success probability of the protocol for $r=4$ are as follows:
\begin{equation}
\begin{aligned}
p_{\mathrm{suc}} = \frac{3w^4 + 4w^3 + 1}{8}, \quad
w_{\mathrm{out}} = \frac{2w^2}{3w^2 - 2w + 1}.
\end{aligned}
\label{eq:4-to-1}
\end{equation}
For the complete success-probability and output-fidelity functions required for the multi-copy purification protocols used in this work, see Tables 5 and 6 in Appendix E of \cite{Jansen2022Enumerating}. Comparing these two families allows us to assess how strongly the resource requirements and feasibility of hop-independent fidelity depend on purification efficiency.

\subsection{Resource Scaling Under Recursive Purification}
\label{subsec:blackbox_results}

\begin{figure*}
    \centering
    \includegraphics[width=\linewidth]{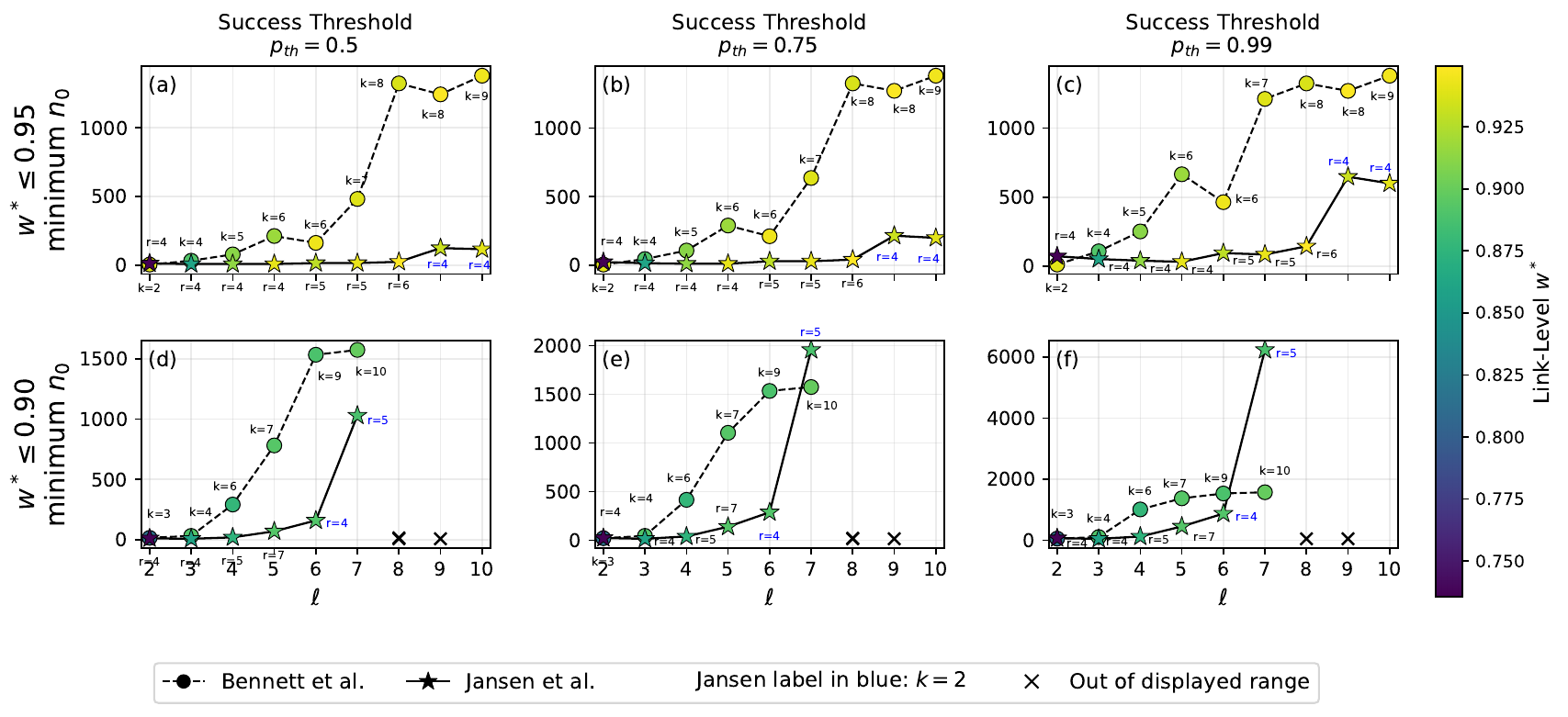}
    \caption{Minimum number of initial copies $n_0$ required to achieve hop-independent Werner parameter $w^*$ as a function of path length $\ell$ under different purification strategies. Columns correspond to success-probability thresholds $p_{th}$, and rows correspond to Werner parameter threshold $(w^*\leq w^{th})$. The y-axis shows the minimum $n_0$ such that $P_{\mathrm{succ}}\geq p_{th}$. Circles denote BBPSSW, and stars denote the best multi-copy protocol of Jansen \emph{et al.} selected from $r\in\{4,5,6,7\}$. Labels next to stars indicate the optimal $r$ and the optimal purification recursion depth $k$, with blue denoting $k=2$ and black denoting $k=1$. All points are computed using the smallest feasible purification depth $k$ for which purification achieves the target Werner parameter $w^*$.}
    \label{fig:fig3_best_jansen_vs_bbpssw}
\end{figure*}

\begin{figure*}
    \centering
    \includegraphics[width=\linewidth]{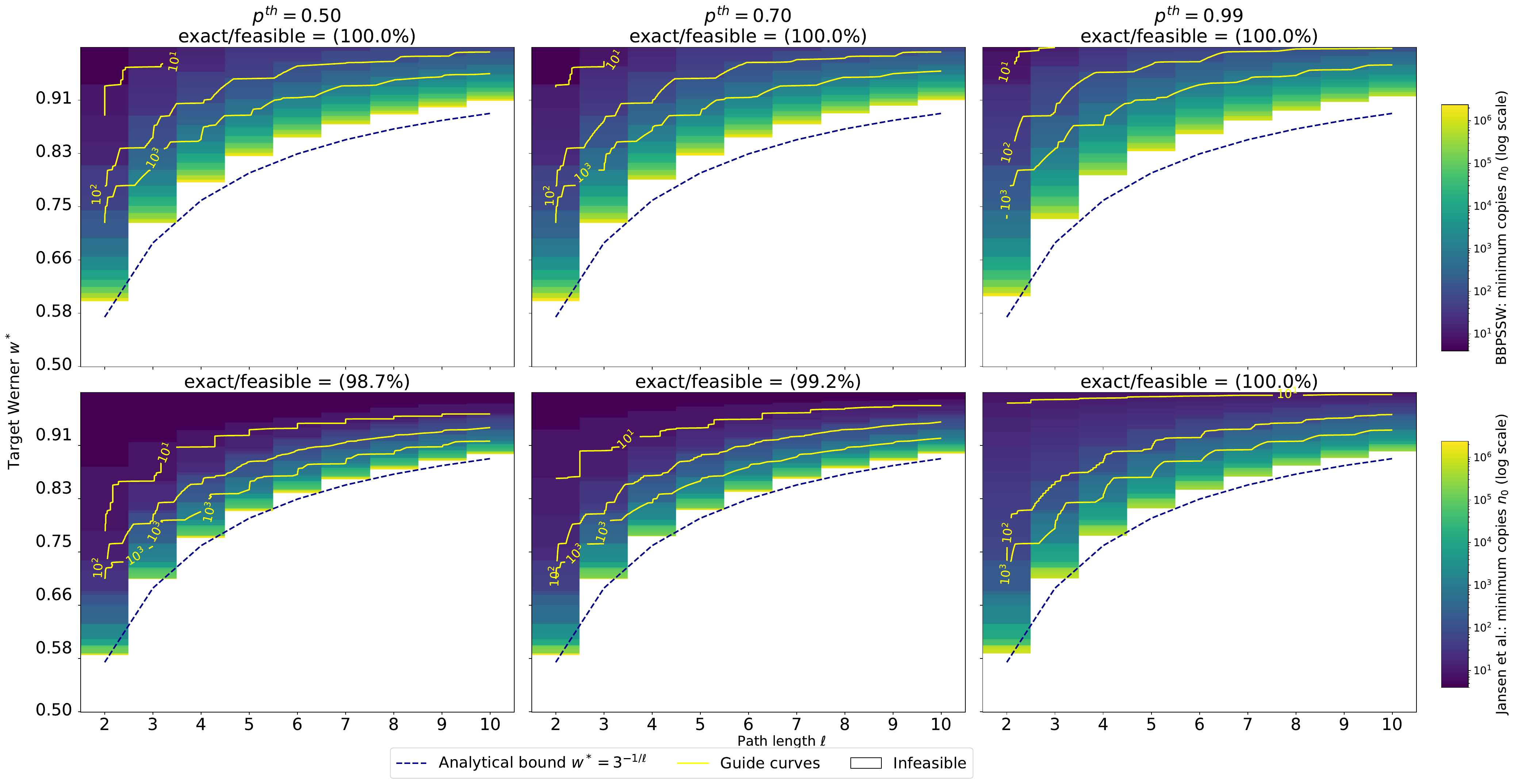}
    \caption{Black-box resource landscape for recovering a target Werner parameter $w^*$ over a path of length $\ell$ under different purification strategies and success-probability requirements. Each panel is a heatmap over $(\ell,w^*)$, where the color indicates the minimum number of input copies $n_0$ required to achieve $w_{\mathrm{out}}\geq w^*$ with success probability at least $p_{th}$. The top row corresponds to BBPSSW purification, while the bottom row shows the best Jansen $r$-to-$1$ protocol selected from $r\in\{3,4,5,6,7\}$. The dashed curve marks the analytical scaling $w^*=3^{-1/\ell}$, and the overlaid contours indicate constant $n_0$ values. White regions are infeasible within the explored search range.}
    \label{fig:blacbbox_heatmap}
\end{figure*}

We now characterize the resource cost of hop-independent Werner-parameter recovery by examining the minimum number of initial end-to-end copies $n_0$ required as a function of $w$ and $\ell$, under different purification protocols and success-probability thresholds. This comparison identifies the regimes in which hop-independent Werner-parameter recovery remains resource-feasible and the purification configurations that achieve it most efficiently.

Figure~\ref{fig:fig3_best_jansen_vs_bbpssw} provides a condensed view of this resource comparison. The horizontal axis gives the path length \(\ell\), while the vertical axis shows the minimum \(n_0\) required to recover hop-independent Werner parameter, subject to the imposed success-probability threshold. For each purification protocol, path length \(\ell\), and fixed-point constraint $w^* \leq w_{\mathrm{th}}$, where $w_{\mathrm{th}}$ is an upper bound on the target Werner parameter that parameterizes the link-quality regime under consideration, we identify the smallest recursion depth $k$ for which purification achieves the target Werner parameter $w^*$ and compute the minimum $n_0$ such that the overall purification success probability exceeds $p_{\mathrm{th}}$.

Several trends are visible in Fig.~\ref{fig:fig3_best_jansen_vs_bbpssw}. As the path length \(\ell\) increases, the input quality to purification degrades exponentially as \(w_{e2e}=(w^*)^\ell\), for any given $w^*$, which drives rapid growth in the minimum required \(n_0\), especially in the long-path regime. However, this growth is not strictly monotonic, as the system dynamically transitions between different purification depths $k$ and multi-copy configurations $r$, resulting in regime shifts that locally reduce resource requirements. Across most of the parameter space, multi-copy \(r\)-to-\(1\) purification significantly outperforms the standard BBPSSW protocol, particularly for moderate and large \(\ell\), although BBPSSW remains competitive for very short paths. Among the multi-copy protocols, \(r=4\) appears most often as the optimal choice, while higher-order \((r=5,6,7)\) become favorable in specific regimes where additional redundancy compensates for their increased purification cost. Moreover, relaxing the fixed-point constraint on the recovered Werner parameter, i.e., increasing \(w_{th}\), consistently reduces the required resources by improving the effective input to purification. By contrast, tightening the success-probability threshold primarily rescales the required \(n_0\) without changing the overall structure of the optimal purification strategy, indicating that the dominant bottleneck is governed mainly by fidelity constraints rather than purification success alone.

While Fig.~\ref{fig:fig3_best_jansen_vs_bbpssw} isolates the optimal operating point for each \(\ell\) under the imposed fixed-point constraints, Fig.~\ref{fig:blacbbox_heatmap} gives the more complete view of the black-box resource landscape by resolving the full \((\ell,w^*)\) domain. Each panel shows a heatmap in which the color indicates the minimum number of input copies \(n_0\) required to achieve \(w_{\mathrm{out}}\geq w^*\) with success probability at least \(p_{th}\). The dashed blue curve marks the analytical scaling \(w^*=3^{-1/\ell}\), which follows from the fundamental requirement that the input Werner parameter to purification must exceed $1/3$ for any protocol to improve fidelity, below which purification cannot recover link-level quality regardless of the number of copies available. The overlaid yellow contours provide reference levels for $n_0$, showing the growth of the resource requirement, while white regions indicate infeasibility within the explored search range. For each grid point, the smallest $n_0$ and corresponding protocol parameters satisfying both the fidelity constraint \(w_{\mathrm{out}}\geq w^*\) and the success-probability constraint \(P_{\mathrm{succ}}\ge p_{th}\) are selected by exhaustive evaluation over the discretized \((\ell,w^*)\) domain. This construction makes it possible to compare not only whether a point is achievable, but also how costly it is to achieve.

Figure.~\ref{fig:blacbbox_heatmap} shows that; first, tightening the success-probability threshold primarily affects the number of required input copies rather than the feasibility boundary itself. Second, the accepted regions exhibit a clean threshold structure in $w^*$: for each fixed path length \(\ell\), the feasible set forms a contiguous upper-tail interval characterized by a single lower boundary curve \(w^*_{\min}(\ell)\), which closely tracks the analytical scaling trend. Most importantly, the Jansen family consistently attains a larger accepted region with fewer required copies than BBPSSW, especially in the high-\(p_{th}\) regime, highlighting the advantage of multi-copy purification for scalable recovery of hop-independent Werner quality.

Taken together, the results show that resource-feasible hop-independent Werner recovery is governed primarily by path length and purification dynamics, that success-probability requirements act mainly as a resource multiplier rather than a feasibility limiter, and that multi-copy purification substantially enlarges the feasible operating region relative to BBPSSW.

\section{Architecture}
\label{sec:Architecture}

We use BCube as the structural substrate for the proposed quantum data-center architecture because it provides a clean way to connect topology, path diversity, and purification resources~\cite{Guo2009BCube,pouryousef2026benchmarking}. In a server-centric quantum data center, QPUs act both as communication endpoints and as intermediate repeaters. The main challenge is that multi-hop entanglement swapping degrades the fidelity of an end-to-end state as the path length increases. BCube is a useful test case for studying whether this degradation can be compensated by topology-provided redundancy, because its multipath structure has two important properties. First, between any two QPUs, the number of edge-disjoint shortest paths is equal to the length of the shortest path. Second, the switch levels, or equivalently the link types, traversed by these disjoint paths appear in cyclic order. Therefore, when all successful link-level entanglements generated at a given switch level have the same quality, the disjoint paths have identical end-to-end fidelity profiles.

This behavior is different from fixed-degree grid-like topologies. For example, the node degree is fixed at $4$ in a square grid and $6$ in a triangular grid, so the number of disjoint paths available to a QPU pair remains limited even as the network diameter grows. In such networks, longer paths increase the number of swapping operations without providing a corresponding increase in independent copies for purification. In contrast, BCube couples increasing path length with increasing path redundancy, making it a natural candidate for studying hop-independent entanglement distribution.

To make the cyclic path property concrete, consider the example shown in Fig.~\ref{fig:cyclic}. Here, QPUs A and B are separated by a shortest-path length of $3$, and the schematic shows one set of three edge-disjoint shortest paths between them. These paths traverse the same three link types in cyclically rotated orders---namely strides $\{1,2,3\}$, $\{2,3,1\}$, and $\{3,1,2\}$. For Werner states, the effective end-to-end Werner parameter after swapping is the product of the per-link Werner parameters along the path. Since multiplication is commutative, all three paths yield the same end-to-end Werner parameter whenever the per-level link qualities are uniform; that is, whenever all BSM stations at the same stride generate successful link-level entanglement with the same Werner parameter. The cyclic reordering changes which physical links are used, preserving edge-disjointness and improving fault tolerance, without changing the accumulated fidelity profile.

\begin{figure}[t]
\centering
\includegraphics[width=\linewidth]{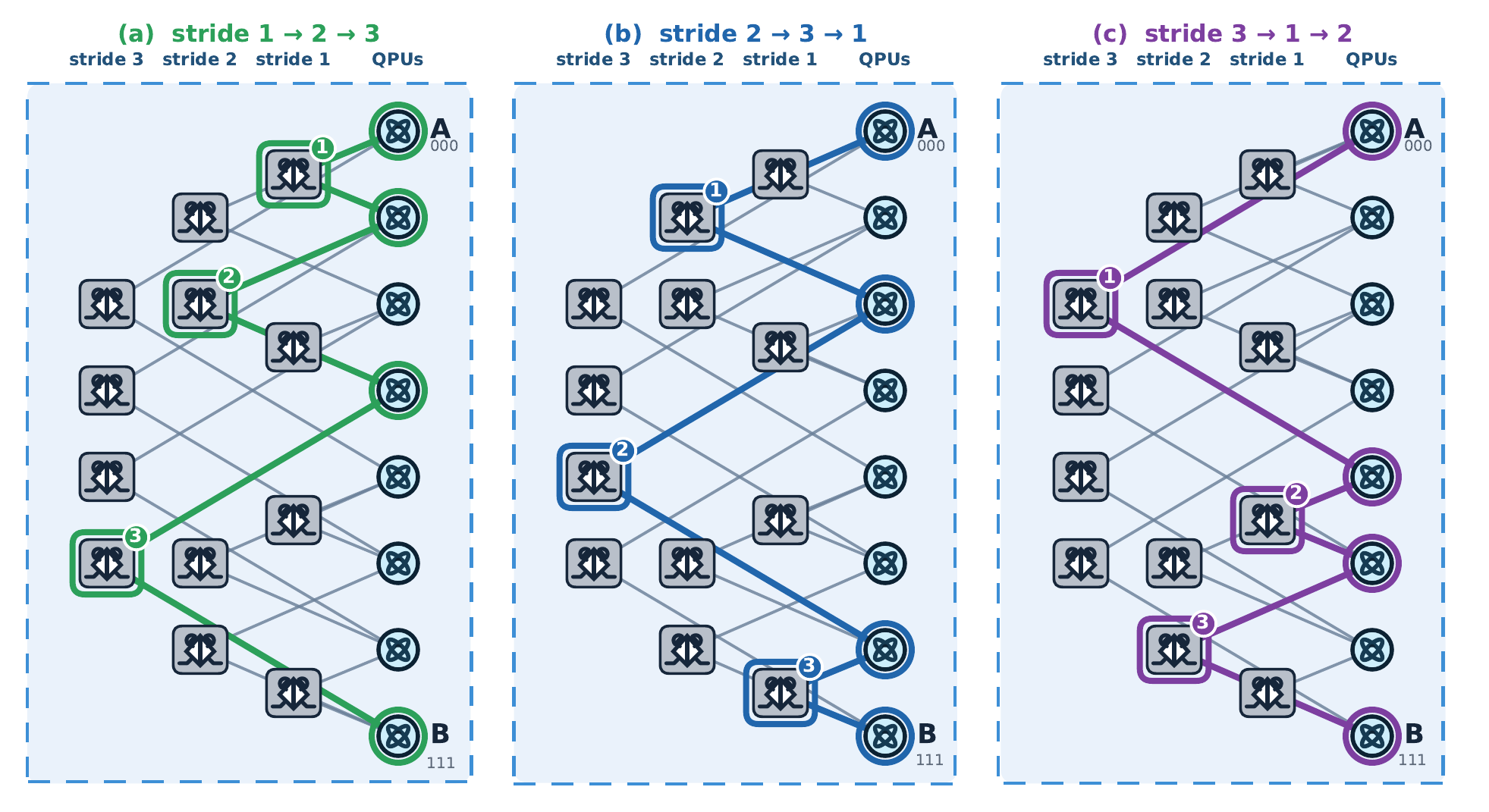}
\caption{Edge-disjoint paths between two QPUs separated by a length-3 shortest path, in a $\mathrm{BCube}(2,2)$ inner network. Each panel shows one of the three cyclic stride orderings. Highlighted nodes and edges mark the active path, and numbered badges indicate the traversal order. The three paths use the same three link types in cyclically rotated order, yielding the same end-to-end Werner-parameter product under uniform per-level link quality.}
\label{fig:cyclic}
\end{figure}

Formally, a $\mathrm{BCube}(n,k)$ contains $n^{k+1}$ QPUs addressed by $(k{+}1)$-digit strings over the alphabet $\{0,\dots,n{-}1\}$. It has $k{+}1$ switch layers, and two QPUs share their layer-$i$ switch precisely when their addresses agree in every coordinate except the $i$-th. Thus, a single hop through a layer-$i$ switch changes only coordinate $i$. In our quantum setting, these switch layers are realized by BSM stations, similar in role to the NIR switches considered in~\cite{Shapourian2025QDC}. With this structure, two QPUs at Hamming distance $d$ are joined by exactly $d$ edge-disjoint shortest paths, each of length $d$; see Theorem~\ref{thm:disjoint} in Appendix~\ref{sec:appendix}. The equality of their fidelity profiles follows from the cyclic ordering and commutativity argument above; see Theorem~\ref{thm:fidelity}.

These structural properties translate directly into entanglement-distribution resources. An $r$-to-$1$ purification step consumes $r$ independent end-to-end entangled state; in this setting, those copies must be prepared over $r$ disjoint end-to-end paths. Since the number of edge-disjoint shortest paths grows with the path length, larger-diameter BCube instances naturally provide more independent copies that can be used for purification, including multiple $r$-to-$1$ purification attempts in parallel. Moreover, because the disjoint paths accumulate the same set of per-level link qualities, the resulting end-to-end copies are well matched for purification at the end QPUs. Throughout this work, we model every BSM station as generating bipartite entanglement with success probability $p$ and, upon success, producing a Werner state with parameter $w$.

A conventional BCube can be scaled in two direct ways, but each direction introduces a different limitation. Increasing the order $k$ stacks $n$ copies of $\mathrm{BCube}(n,k)$ and connects them through a new layer of $n^{k+1}$ stations to form $\mathrm{BCube}(n,k{+}1)$, which contains $n^{k+2}$ QPUs. This \emph{scale-up} approach increases the network diameter to $k{+}2$ hops, so the worst-case number of swaps grows and the raw end-to-end fidelity decreases. Increasing the radix $n$ while holding $k$ fixed grows the QPU count as $n^{k+1}$, but each station must then serve $n$ QPUs. As a result, the per-station entanglement-generation load scales with $n$, and the stations can become bottlenecks, mirroring the centralized-resource limitation of switch-centric designs.

\subsection{The HI-QDC Architecture}

\begin{figure}
\centering
\includegraphics[width=\linewidth]{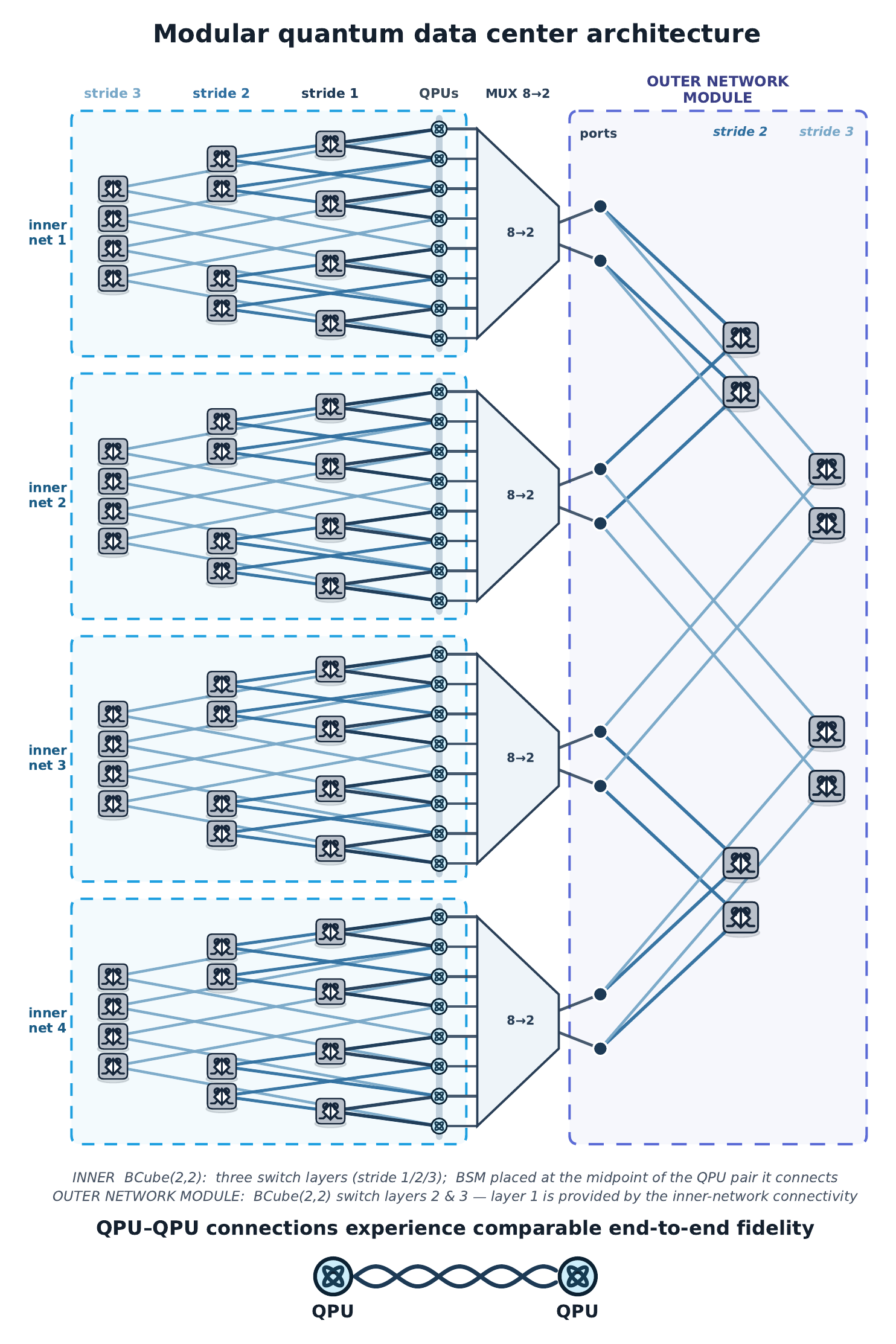}
\caption{Hierarchical entanglement distribution over a two-level $\mathrm{BCube}(2,2)$ architecture. The network contains $32$ QPUs connected through an outer $\mathrm{BCube}(2,2)$ network, while groups of eight QPUs are connected through inner $\mathrm{BCube}(2,2)$ modules marked by light-blue dashed boxes. Each inner module consists of three BSM-station layers, corresponding to strides $1$, $2$, and $3$, with four stations per layer. The outer network reuses purified intra-module end-to-end pairs as its stride-1 elementary links and therefore instantiates only its stride-2 and stride-3 BSM layers. Each inner module can distribute and purify an end-to-end pair between any two of its QPUs back into the accepted hop-independent resource region, so QPUs that connect to the outer network need not be physically adjacent within the inner topology. The outer network then performs inter-module routing, entanglement swapping, and purification so that QPU--QPU connections can attain the same target link-level Werner parameter regardless of their positions within or across modules.}
\label{fig:Architecture-recursive}
\end{figure}

We propose a third scaling mechanism, which we call HI-QDC: a recursive, isometric, modular scaling framework for quantum data centers that aims to distribute hop-independent entanglement across QPUs. Rather than increasing $k$ or $n$ directly, HI-QDC scales by composition. It nests BCube modules inside an outer module with the same structural form. We use the term \emph{isometric} to emphasize this self-similarity: every level of the recursion---inner and outer alike---is based on the same $\mathrm{BCube}(n,k)$ topology. Each inner module is a $\mathrm{BCube}(n,k)$ hosting $n^{k+1}$ QPUs, and $n^{k}$ such modules are interconnected by an outer module that is again a $\mathrm{BCube}(n,k)$. The total number of QPUs is therefore
\begin{equation}
\label{eq:total}
N \;=\; \underbrace{n^{k}}_{\text{modules}}\times\underbrace{n^{k+1}}_{\text{QPUs/module}} \;=\; n^{\,2k+1}. \end{equation}
For the $\mathrm{BCube}(2,2)$ instance shown in Fig.~\ref{fig:Architecture-recursive}, this corresponds to four inner modules, each containing eight QPUs, for a total of $N=32$ QPUs. Because the outer module reuses the same BCube structure, the construction can be applied recursively: each module at one level can itself be replaced by another HI-QDC block. Consequently, the hop-independent-fidelity condition only needs to be enforced locally at each level of the recursion.

The inner and outer networks are coupled through multiplexers. Each inner module presents its $n^{k+1}$ QPUs to an $n^{k+1}\rightarrow n$ multiplexer---the $8\rightarrow2$ stages in Fig.~\ref{fig:Architecture-recursive}---and thereby exposes $n$ outer-facing ports. Across all $n^{k}$ modules, the number of QPUs that can participate in cross-module connectivity at any instant is $n^{k}\times n=n^{k+1}$, exactly matching the endpoint count of the outer $\mathrm{BCube}(n,k)$. In this construction, the first layer of the outer BCube is supplied by intra-module connectivity, while the outer module realizes the remaining $k$ layers, namely layers $2$ through $k{+}1$. For the $\mathrm{BCube}(2,2)$ example, these are layers $2$ and $3$.

The remaining $n^{2k+1}-n^{k+1}$ QPUs participate in local entanglement through intra-module connectivity. This follows from the HI-QDC load model: at each protocol time step, each inner module can expose at most $n$ QPUs to the outer network, matching its $n$ outer-facing ports. For $\mathrm{BCube}(2,2)$, this corresponds to two outer-facing QPUs per inner module. This per-module allocation, however, need not be rigid. Because the defining goal of HI-QDC is to preserve the outer topology, the quantity that must remain fixed is the aggregate link-level capacity of the outer $\mathrm{BCube}(n,k)$, not the exact number of ports assigned to each module. Let $m_i$ denote the number of outer-facing QPUs drawn from module $i$. The symmetric allocation sets $m_i=n$ for every module, but this capacity can be redistributed according to the request pattern and the computational priority of the participating QPUs. Under full utilization, the conserved-capacity condition is
\begin{equation}
\label{eq:conservation}
\sum_{i=1}^{n^{k}} m_i = n^{k+1}.
\end{equation}
Thus, a module that needs more than $n$ of its QPUs to participate in the outer network can borrow capacity from modules that are not fully using theirs. The hard constraint is that the total outer-network capacity remains fixed, thereby preserving the topology and its associated fidelity guarantees.

The requirement that the inner and outer levels share the same fixed topology is tied to the baseline assumption that all BSM stations in the network provide homogeneous link-level entanglement resources. In this model, every BSM station distributes bipartite entanglement with success probability $p$ and, upon success, produces a Werner state with parameter $w$. Under this homogeneity assumption, each recursive level receives resources with the same statistical description, so the same topology-based redundancy argument can be applied at every level. If this assumption is relaxed, then the main design constraint becomes resource balancing. The link-level resources provided by the BSM stations, together with the redundancy provided by the topology, must compensate for both the fidelity loss caused by multi-hop swapping and the rate loss introduced by higher-order purification. A module can serve as an effective elementary link for the next level only if its output resources remain inside the hop-independent region required by the outer network.

\begin{figure*}[t]
\centering
\includegraphics[width=\linewidth]{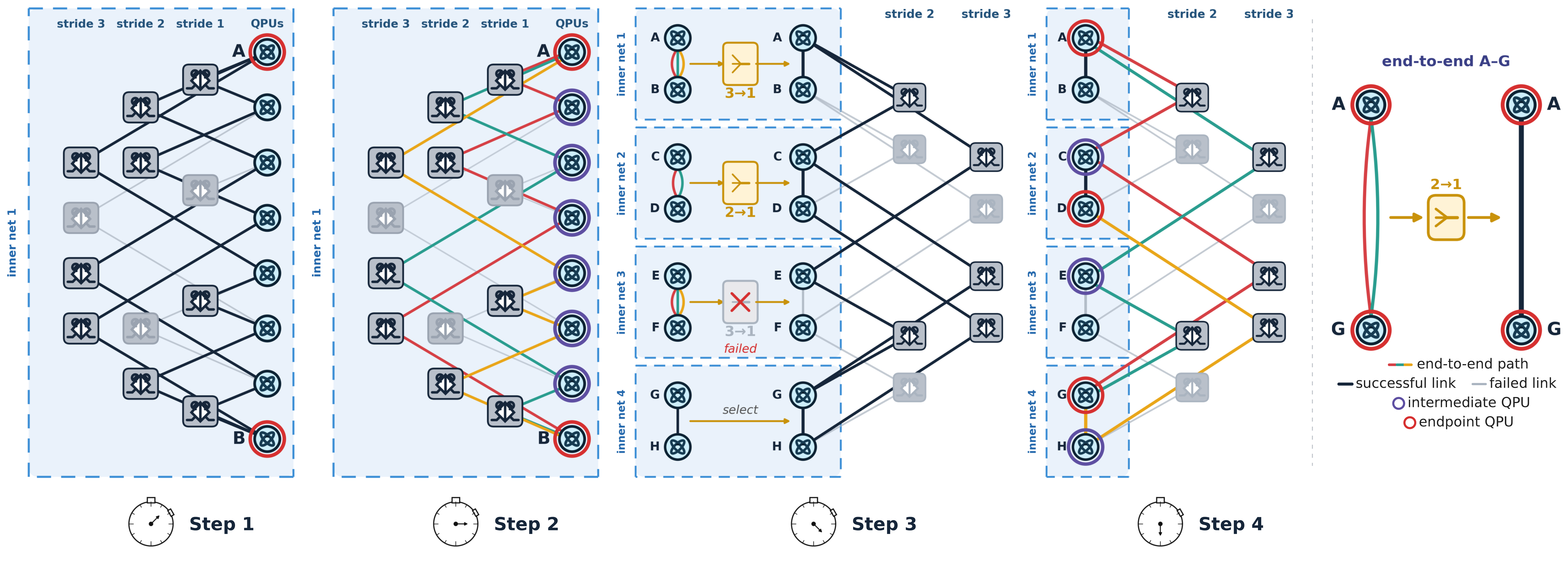}
\caption{End-to-end entanglement generation protocol in the proposed modular quantum data-center architecture over a $\mathrm{BCube}(2,2)$ topology. Step 1: Link-level Bell-state measurements are performed along strides $1$, $2$, and $3$ of an inner $\mathrm{BCube}(2,2)$ module, establishing raw entangled links between QPU pairs. Step 2: Three edge-disjoint paths between endpoints A and B are selected, and entanglement swapping is performed at intermediate QPUs to yield three independent end-to-end A--B pairs within the module. Step 3: The same preparation procedure is carried out independently and in parallel across all four inner modules. Modules with sufficient copies apply intra-module entanglement purification, indicated by the gold distillation box, while the others fall back to a single selected copy. Concurrently, the outer network attempts link-level entanglement generation across modules over strides $2$ and $3$. Step 4: The inter-module links are routed through the outer $\mathrm{BCube}(2,2)$ network, with entanglement swapping at intermediate QPUs, producing multiple end-to-end copies across modules. The final panel shows inter-module purification of the resulting copies, yielding a single high-fidelity end-to-end A--G entangled pair. Gray links indicate failed entanglement attempts or discarded copies; colored paths distinguish the edge-disjoint routes.}
\label{fig:walk-through}
\end{figure*}

We now walk through one protocol run for establishing entanglement between QPUs located in different modules. In this example, both the inner modules and the outer module follow a $\mathrm{BCube}(2,2)$ topology. Each inner module prepares a link-level-quality entangled pair that will act as one elementary link for the outer network. This preparation corresponds to the first two steps in Fig.~\ref{fig:walk-through}. In the first step, the inner modules attempt link-level entanglement generation across their BCube switch layers. In the second step, intra-module routing and entanglement swapping are performed along the successfully generated links in order to distribute end-to-end entanglement between the selected port QPUs.

We consider the worst-case intra-module setting, where the two QPUs selected as outer-network ports are not nearest neighbors but are separated by multi-hop paths. This can occur when the nearest QPUs are unavailable, busy with local tasks, or not selected to serve any request. Since the two selected QPUs in this example differ in all three BCube coordinates, they can be connected through up to three edge-disjoint shortest paths. For this farthest-apart pair, the union of the relevant shortest paths spans all link-level entanglement connections in the inner module. Therefore, during the first time step, all BSM stations in that module attempt elementary link generation across the BCube switch layers. During the second time step, routing and entanglement swapping are performed along the successfully generated paths, producing multiple raw end-to-end entangled pairs between the two port QPUs. If three edge-disjoint paths succeed, the inner module can then perform a $3$-to-$1$ purification step to convert these noisy multi-hop pairs into one higher-fidelity entangled pair.

Step~3 of Fig.~\ref{fig:walk-through} illustrates this preparation process across the inner modules. For example, the first inner module has enough successful end-to-end resources to perform purification and produce a higher fidelity entanglement, whereas another module may not have enough successful paths, as in inner module~4, or may have enough resources but fail during the purification procedure, as in inner module~3. A successful (purified) pair is then treated as a link-level entanglement resource for the outer network. During the final time step of this preparation phase, the outer module also attempts elementary link generation between ports of different modules using its own BSM stations. We should note that more generally, the strict isometric construction is the baseline used in HI-QDC. However, the same recursive argument can be extended to an outer network with larger $k$ or larger $n$, provided that the resources prepared by the inner modules remain in the hop-independent region required by that outer network.

After the link-level resources for the outer network have been prepared, the outer network performs routing and entanglement swapping to satisfy cross-module requests, as shown in Step~4. Suppose the requested connections are $\{A,G\}$ and $\{D,G\}$, and suppose $\{A,G\}$ has the highest priority. The controller would assigns $G$ to the outer-network port that provides the best route toward $A$. If that port is not the appropriate port for completing the route from $D$ to $G$, the module may use an additional intra-module pair, such as a pair between $G$ and an intermediate QPU $H$, to enable the second route. In this case, $H$ is not an end user of the final cross-module connection; it only serves as an intermediate QPU that exposes the required outer-facing port. This illustrates why some intra-module entangled pairs are needed even when one of their endpoint QPUs is not directly part of the user request.

Finally, after the outer-network paths are formed, the controller checks how many edge-disjoint end-to-end paths are available for each request. In the example shown, assume that only one end-to-end path is available for connecting $D$ and $G$, through the sequence in which $D$ shares entanglement with $H$ and $H$ shares entanglement with $G$. In contrast, two edge-disjoint paths are available for connecting $A$ and $G$: one through an intermediate QPU $C$ and another through an intermediate QPU $E$. The $\{D,G\}$ connection is therefore established directly through its single available path, whereas the $\{A,G\}$ connection can use a $2$-to-$1$ purification step across its two available edge-disjoint paths to improve the final cross-module fidelity. In this way, the HI-QDC architecture combines a quality-preserving intra-module topology with an isometrically recursive design to enable scalable, fidelity-preserving entanglement distribution across QPUs.

\subsection{Limitations and Benefits}

Fixing the topology (having the isometric modularity) comes with a limitation on simultaneous reach: at any protocol time step, only $n^{k+1}$ of the $n^{2k+1}$ QPUs can participate in cross-module connectivity. This fixed-topology constraint is adopted for two positive design advantages. First, because every module has the same $\mathrm{BCube}(n,k)$ structure, every level of the recursion has the same set of path lengths and therefore swap counts; with path lengths satisfying $d\in\{1,\dots,k{+}1\}$. This makes a uniform per-module purification budget well defined. The design then reduces to identifying the link-level resource quality and success probability (w,p)(w,p)
(w,p) that suffice, after routing, swapping, and purification over the edge-disjoint copies supplied by the topology, to deliver every QPU--QPU connection with hop-independent performance. In this work, hop independence means that the end-to-end resource should match the link-level benchmark, rather than merely improve over the unpurified multi-hop state.

This requirement is stronger than simply increasing fidelity after swapping. It asks whether the network can compensate simultaneously for the fidelity loss caused by longer paths and the rate loss introduced by purification. The role of the topology is therefore to determine how many independent end-to-end copies are available before additional multiplexing is introduced. In a non-multiplexed BCube, a distance-$d$ connection can draw on at most the $d$ edge-disjoint shortest paths supplied by the topology, so the purification copy budget is tied directly to path length. Multiplexing can enlarge this budget by generating multiple parallel entanglement at an elementary link, thereby providing additional independent copies even for shorter paths. Thus, the central architectural question is whether the combination of topological redundancy and multiplexing can supply enough copies to restore the link-level resource without exhausting the rate budget.

Second, once such a hop-independent operating regime exists for one module, the isometric recursion propagates it across scales. Let $\mathcal{M}$ denote the effective map induced by one $\mathrm{BCube}(n,k)$ module, taking a link-level resource with parameters $(w',p')$ to the corresponding end-to-end resource after routing, swapping, and purification. If there exists a regime in which
\begin{equation}
\mathcal{M}(w',p') \geq (w',p'),
\end{equation}
then the module outputs an end-to-end resource that is at least as good as the resource it received at its input. This output can therefore serve as the link-level resource for the next recursive level. Because the next level has the same topology, it applies the same map $\mathcal{M}$ and again returns a resource no worse than $(w',p')$. The argument then repeats unchanged at every level of the hierarchy.

Therefore, a single demonstration of the hop-independent regime for one $\mathrm{BCube}(n,k)$ module is sufficient to establish the same guarantee for every module at every recursion depth, provided that the same resource assumptions are maintained. Hop independence is thus invariant under the HI-QDC scaling construction, which is the central architectural guarantee of the design.

Beyond this guarantee, the fixed topology also proves resource-efficient, requiring substantially fewer physical resources than an equivalent flat network. If the same number of QPUs were hosted by a flat BCube, the network would need to scale to $\mathrm{BCube}(n,2k)$, which requires substantially more BSM stations. In contrast, HI-QDC reuses smaller $\mathrm{BCube}(n,k)$ modules and connects them through an outer module of the same form. As shown in Appendix~\ref{sec:appendix}, this reduces the BSM count by a factor approaching one half as $k$ grows. For the $\mathrm{BCube}(2,2)$ example, HI-QDC uses $56$ BSM stations instead of the $80$ required by a flat $\mathrm{BCube}(2,4)$. Thus, the restricted simultaneous connectivity of HI-QDC naturally brings a significant reduction in physical switching resources.

\section{Topology-Aware Resource Analysis}
\label{sec:topology-aware}

Building on the HI-QDC architecture introduced in Section~\ref{sec:Architecture}, we now turn from the structural design to the resource question it raises. That section established why BCube is a useful substrate: although every entanglement swap along a multi-hop path degrades fidelity, the topology supplies as many edge-disjoint shortest paths as the distance separating a QPU pair, and these paths provide the independent end-to-end copies that purification consumes. The goal of this section is to quantify when those topology-provided copies suffice to offset multi-hop fidelity degradation, and when additional link-level resources or multiplexing are required.

We study this question in two regimes. In the first regime, we consider a fidelity-focused version of the modular HI-QDC design. Here, purification is used to restore the end-to-end Werner parameter of each module back to the link-level target, but the probability of successfully producing such a purified resource may still decrease as the recursion depth increases. This gives a limited form of hop-independent fidelity: the output state quality can be maintained across hops, but scalable operation is possible only up to the point where the end-to-end success probability remains above an acceptable threshold.

In the second regime, we impose a stronger requirement needed for scalable recursive operation. It is that in addition to restoring fidelity, the network must also maintain a hop-independent rate, meaning that the end-to-end success probability or yield must match the corresponding link-level benchmark. This condition is necessary if hop-independent fidelity is to remain scale-free across modules. In other words, a hop-independent rate is a prerequisite for hop-independent fidelity. We therefore study the minimum link-level resources, specified by the Werner parameter $w$, the elementary-link success probability $p$, and the available multiplexing $m$, that allow a $\mathrm{BCube}(n,k)$ module to reproduce its input resource at its output. Together, these two regimes separate scale-limited fidelity-only recovery from true scalable hop-independent performance and clarify the additional resource conditions required for the latter.

\subsection{Maintaining Link-Level Fidelity}

\begin{figure*}[t] 
\centering 
\includegraphics[width=\linewidth]{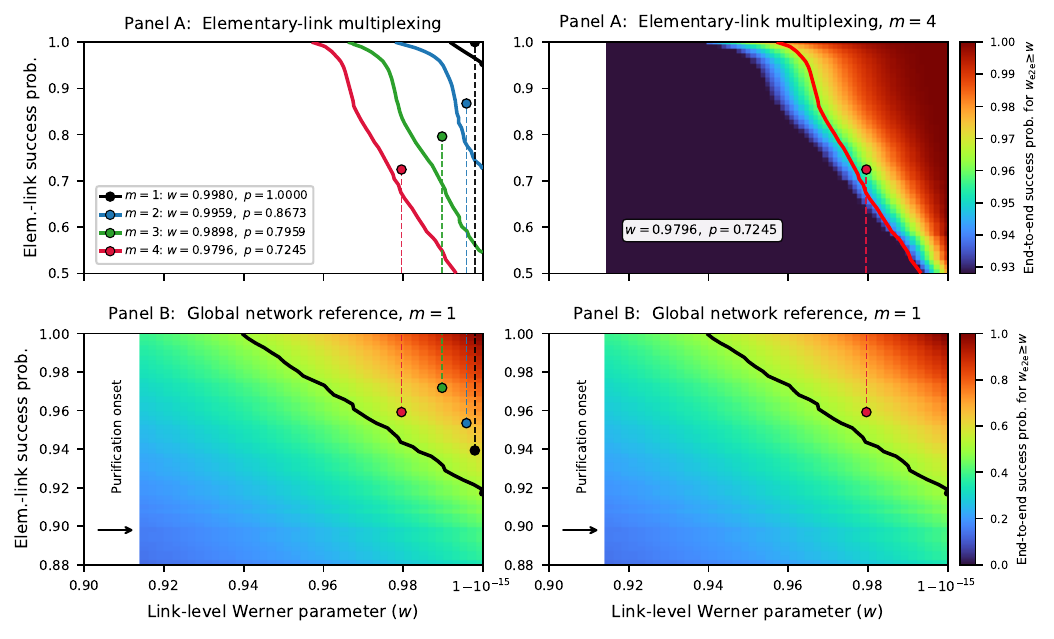}
\caption{Accepted resource region for hierarchical entanglement distribution in a $\mathrm{B\text{-}Cube}(2,3)$ network. The horizontal axis is the link-level Werner parameter $w$, and the vertical axis is the link-level entanglement generation success probability $p$. Heatmap color shows the end-to-end success probability of distributing entanglement with $w_{\mathrm{e2e}} \geq w$ between diameter-apart hosts. In the top panel, contours mark the smallest elementary-link multiplexing level $(m)$ for which the inner-network output can serve as an effective link of an outer $\mathrm{B\text{-}Cube}(2,3)$ with end-to-end success probability at least $0.5$. The bottom panel shows the corresponding outer-network reference for $m=1$.}
\label{fig:heatmap_limitted_hop_fid} 
\end{figure*}

This subsection studies a first practical form of hop-independent fidelity in BCube, which we refer to as \emph{limited hop-independent fidelity}. As discussed in Section~\ref{sec:Architecture}, a $\mathrm{BCube}(n,k)$ network has diameter $k{+}1$, so the longest shortest path between any two QPUs traverses $k{+}1$ elementary links and therefore undergoes $k$ entanglement swaps. The strongest possible target for hop-independent fidelity would be to make the network behave, from the standpoint of fidelity, as if every QPU--QPU connection had effective length one, so that any pair of QPUs distributes entanglement with an end-to-end Werner parameter equal to the elementary link-level value, $w_{\mathrm{out}}\geq w$. This objective is demanding in practice, because the fidelity of the swapped state degrades with every swap. The number of edge-disjoint end-to-end copies the topology supplies, however, grows with the length of a connection, and faster than the number that purification requires, so the longest connections carry more than enough copies while the shorter multi-hop pairs fall short. We therefore begin with a more practical target: determining whether the farthest-apart QPU pairs can recover the link-level Werner parameter while still maintaining a meaningful end-to-end entanglement generation rate.

Under this target, we examine whether BCube supplies a sufficient route budget, that is, enough edge-disjoint paths to provide the input copies that purification requires at each path distance. The black-box purification analysis shows that paths of length $2$, $3$, and $4$ require at least $3$, $4$, and $4$ input copies, respectively, to recover the elementary link-level Werner parameter over a nontrivial resource region under the $r$-to-$1$ purification protocol of Jansen \emph{et al.}~\cite{Jansen2022Enumerating}. On the supply side, $\mathrm{BCube}(n,k)$ provides exactly $d$ edge-disjoint shortest paths between a pair separated by distance $d$, and each path delivers at most one independent end-to-end copy per non-multiplexed round; under probabilistic entanglement generation, in which some of these paths may fail, at most $d$ copies are available in a given round. A diameter-apart pair in $\mathrm{BCube}(n,1)$ therefore has distance $2$ and supplies at most two copies, whereas at least three are needed, and a diameter-apart pair in $\mathrm{BCube}(n,2)$ has distance $3$ and supplies at most three copies, whereas at least four are needed. The first feasible case is $\mathrm{BCube}(n,3)$: its diameter-apart pairs have distance $4$ and access four edge-disjoint copies, matching the minimum requirement of the $4$-to-$1$ purification protocol. Thus $\mathrm{BCube}(n,3)$ is the smallest BCube instance in which purification-based diameter reduction becomes feasible without relying on additional multiplexing.

This property is already significant within a single network. Restoring the fidelity of diameter-apart pairs reduces the effective network diameter even when the same restoration is not achieved for every QPU--QPU pair. In $\mathrm{BCube}(2,3)$, applying $4$-to-$1$ purification to the diameter-apart pairs upgrades the original length-$4$ connections to effective length one. Using the QPU-pair distance distribution of $\mathrm{BCube}(2,3)$, once these distance-$4$ pairs are upgraded the distance-$3$ pairs can reroute through an upgraded length-$1$ connection, so that the remaining two-thirds of QPU--QPU pairs become effectively bounded by length $2$. Even this limited form of hop-independent fidelity therefore produces a nontrivial reduction in the effective network diameter.

The stronger question is whether this restored diameter-apart resource can support further scalability. To address it, we consider the HI-QDC architecture in which eight $\mathrm{BCube}(2,3)$ inner networks are interconnected so that selected QPU--QPU connections within each module serve as effective links of an outer network. Through the multiplexers of HI-QDC, each inner module exposes two of its QPUs to the outer fabric, so the eight modules form a sixteen-node outer network with the same $\mathrm{BCube}(2,3)$ topology. The resulting hierarchical network contains
\begin{equation}
n^{k}\times n^{k+1}=2^{3}\times2^{4}=128
\end{equation}
QPUs in total. We take as outer-network elementary links those inner-network connections that restore $w_{\mathrm{e2e}}\geq w$ and also meet a sufficiently high entanglement generation rate. Under this condition, we identify the parameter region in which the outer network can distribute entanglement across its own diameter at effective length one, with the same fidelity as the inner link-level value, while keeping the end-to-end generation success probability above $0.5$. This region encompasses $54\%$ of all end-to-end QPU--QPU connections; the remaining connections are those whose inner-network distance is already $2$, together with the new length-$2$ inter-network connections. Diameter-level Werner-parameter restoration is therefore not merely a local improvement within a single module, but a resource for larger-scale recursive entanglement distribution.

The recursive construction follows the HI-QDC structure illustrated in Fig.~\ref{fig:Architecture-recursive}, which shows a smaller $\mathrm{BCube}(2,2)$ instance for visual clarity; the quantitative results in this subsection are obtained for recursively connected $\mathrm{BCube}(2,3)$ modules. The global objective of the HI-QDC architecture was for QPU pairs to experience an end-to-end Werner parameter equal to that of a length-$1$ or few-hop connection, regardless of which QPU within a module participates in the request. This cannot be achieved by exposing only nearest-neighbor, length-$1$ inner connections to the outer network: such connections already satisfy $w_{\mathrm{e2e}}=w$ without purification, but they make only adjacent QPUs available for cross-module communication, whereas the QPUs that require cross-network entanglement are set by the workload rather than by physical adjacency. The adjacent QPUs may already be occupied with local computation, and several mutually non-adjacent QPUs in the same module may need outer-network connections simultaneously. Relocating an arbitrary QPU state to a port-adjacent QPU would itself require teleportation, hence another inner entangled link, reintroducing the very multi-hop connection it was meant to avoid. Inner purification is therefore not an avoidable overhead; it is the mechanism that lets any QPU, and several at once, act as outer-network endpoints at a uniform link-level Werner parameter.

The quantitative result is summarized in Fig.~\ref{fig:heatmap_limitted_hop_fid}, which characterizes the accepted region of link-level Werner parameter $w$ and elementary-link entanglement generation success probability $p$ for a $\mathrm{BCube}(2,3)$ topology. The accepted region consists of those link-level resources for which $r$-to-$1$ purification at the diameter-apart hosts raises the end-to-end Werner parameter to at least the link-level value, $w_{\mathrm{e2e}}\geq w$, while also providing an end-to-end success probability high enough for the purified inner-network connections to serve as effective links of an outer network with the same topology. This recursive acceptance criterion is strictly stronger than Werner restoration alone: the end-to-end rate of the diameter-apart inner connections must remain high enough that, when they are reused as outer-network links, the outer network can still distribute entanglement between its own diameter-apart host pairs, the furthest-apart pairs, with success probability at least $0.5$. Because the outer network inherits the same effective link-level Werner parameter from the inner-network outputs, the only remaining requirement for recursive scalability is that this inherited rate be sufficient.

Figure~\ref{fig:heatmap_limitted_hop_fid} is arranged as two columns, each with an upper panel~A and a lower panel~B. In all cases, every pair of neighboring hosts performs $m$ parallel entanglement generation attempts that succeed independently with probability $p$, and upon success each entangled state is defined as a Werner state with parameter $w$. For each $(w,p)$ we compute the end-to-end success probability of generating at least one entangled state between diameter-apart hosts with Werner parameter at least $w$ through $r$-to-$1$ purification. In each run, $r$ is set to the smallest value that achieves hop-independence at the given path length. When the number of available entangled pairs exceeds $r$, the surplus is used to run multiple purification attempts in parallel. For $\mathrm{BCube}(2,3)$ with $m=1$, $r=4$ contributes to the end-to-end rate, because a length-$4$ path requires at least $4$-to-$1$ purification to restore its output Werner parameter to the link-level value. Panel~A of the left column overlays, for each $m\in\{1,2,3,4\}$, the contour above which the purified inner-network output qualifies as an outer-network link, so that the accepted region for a given $m$ lies above its contour; the marked points give the representative $(w,p)$ values carried into panel~B.

Panel~B is the global, outer-network reference for $m=1$: its color encodes the end-to-end success probability of the outer network across its own diameter, and the mapped points show where each inner-network output lands, at coordinates $(w_{\mathrm{e2e}},p_{\mathrm{e2e}})=(w,P_{A})$ set by the restored Werner parameter and the inner-network success probability $P_{A}$. The right column isolates $m=4$ so that the panel~A color is free to represent the inner-network end-to-end success probability itself, with its accepted-region boundary drawn in red; this makes visible how the expected number of available end-to-end paths combines with the per-purification success probability to form the contour. The purification onset annotated in panel~B marks the threshold in $w$ below which purification cannot raise the end-to-end Werner parameter to link-level quality. In this way each value of $m$ maps an accepted region of inner-network link-level resources onto a corresponding accepted region for the outer network; smaller multiplexing values produce a markedly narrower mapped region, so increasing the number of multiplexed attempts substantially enlarges the regime in which recursive end-to-end performance can be sustained.

In the Monte Carlo data, the contour along which the outer network maintains an end-to-end success probability above the target threshold partitions the link-level resource plane into two regimes. Reading the link-level Werner parameter $w$ and the link-level success probability $p$ as interchangeable resources, the lower-$p$ branch of the contour is close to linear and reflects a supply-limited operating point: too few elementary links arrive to populate the parallel groups, and the resulting deficit is offset by raising $w$, so that classical success probability and fidelity are exchanged at an approximately constant rate. Because this rate is steep, $w$ is the scarcer resource, and a modest reduction in fidelity must be repaid by a large gain in $p$. The upper-$p$ branch is sharply nonlinear and nearly vertical, corresponding to a purification-limited regime in which copies are plentiful but the per-purification success is too small for the $m$ parallel purification to reliably deliver a restored link; there, additional $p$ yields little benefit and the boundary is fixed almost entirely by $w$. Increasing the multiplexing degree $m$ translates the entire contour toward smaller $w$ and smaller $p$, so the same global target is reached with strictly fewer resources of either kind. This follows from the structure of the inner-network success probability, in which the number of copies available for purification grows with $m$: as $m$ increases, the probability that at least one group is distilled successfully saturates, the fidelity bottleneck recedes, and the purification-limited wall contracts toward lower $w$ while the linear supply-limited arm expands.

Cast as a resource-allocation problem, the three quantities a designer can provision at a fixed topology, namely the link-level Werner parameter $w$, the link-level success probability $p$, and the multiplexing degree $m$, are far from interchangeable in meeting the dual requirement that the outer network deliver hop-independent end-to-end entanglement, $w_{\mathrm{e2e}}\geq w$, at an end-to-end success probability above the chosen threshold. The link-level entanglement fidelity is the dominant and non-substitutable resource: it governs the purification acceptance, and through the hop-independence onset it imposes a hard floor, independent of $p$ and $m$, below which the restored Werner parameter cannot reach $w$ at all, so that no amount of supply or parallelism can compensate for a link whose fidelity is too low. Its marginal effect on the end-to-end success probability is correspondingly the steepest, more than an order of magnitude larger per unit than that of $p$, yet in the relevant regime fidelity already lies close to unity and leaves little headroom to spend. The success probability and the multiplexing degree instead act as substitutable supply, both increasing the number of distilled entangled pairs delivered to the outer network and therefore trading against one another in meeting the success-probability threshold; multiplexing provides the larger gain while $m$ is small but yields diminishing returns as it grows, whereas $p$ is the gentler lever that retains the most headroom. Ranked by physical effect per unit added, and setting provisioning cost aside, the priority is $w$, then $m$ while $m$ remains small, then $p$; the economically optimal allocation further depends on the cost of each resource, which we consider next.

Increasing the elementary-link success probability $p$ requires physical-layer improvements such as lower loss, higher coupling efficiency, better detectors, and more efficient Bell-state measurements. Increasing the elementary-link fidelity requires either better physical noise suppression or post-generation purification, the latter consuming multiple successful raw pairs and adding memory, operation, and scheduling overhead. Increasing the multiplexing degree $m$, by contrast, is primarily an architectural cost: it adds parallel modes, ports, memories, or detectors to generate more raw copies without requiring any individual physical attempt to improve. Multiplexing is therefore often the most accessible engineering knob, fidelity improvement the most resource-consuming, and an increase in $p$ falls between the two as a physical-layer hardware improvement.

Overall, this analysis identifies the resource requirement for the restoration of the link-level Werner parameter on diameter-apart entanglement distribution as the first non-trivial step toward hop-independent fidelity in BCube networks. More importantly, it shows that Werner restoration alone is not sufficient for recursive scalability: the restored connections must also preserve a strong enough end-to-end entanglement generation probability to function as reliable resources at the next network level.

\subsection{Toward Hop-Independent Fidelity}

Scalable hop-independent fidelity requires more than restoring the end-to-end Werner parameter: it requires preserving both end-to-end entanglement quality and end-to-end rate across network modules, i.e., both \(w_{e2e}=w\) and \(p_{e2e}=p\). To summarize our findings so far, B-Cube is the strongest candidate for this objective because its end-to-end path redundancy grows with Hamming distance. This growth in edge-disjoint shortest paths provides the redundancy needed for purification over long distances, while HI-QDC recursive architecture provide the natural mechanism for extending hop count as the feasible resource region shrinks with network size. The recursive study of the smallest feasible case, B-Cube\((2,3)\), already showed that restoring the Werner parameter alone is not sufficient for unbounded scaling. The central question of this subsection is therefore whether scale-invariant hop-independent fidelity is achievable at all, and if so, what initial link-level resources, in terms of Werner parameter \(w\), entanglement generation rate \(p\) and, multiplexing \(m\) are required.

\begin{table}[t]
\centering
\caption{For each $k\le 10$ in the non-multiplexed case ($m=1$), with the corresponding minimum feasible purification size $r$, we check whether there exists any $w\in[w_{\min},1)$ such that a nontrivial $p\in(0,1)$ satisfies $R_{\mathrm{out}}\ge p$. If such a region exists, the table lists the first scanned $w$ and the first scanned $p$ where this occurs.}
\label{tab:bcube_anyw_nontrivialp_k10}
\begin{tabular}{c c c c c}
\hline
$k$ & $r$ & $w_{th}$ & $\exists\,\min{w}$ & $\exists\,\min{p}$ \\
\hline
3  & 4 & 0.913743 & --        & --        \\
4  & 4 & 0.941308 & 0.9999317 & 0.9977339 \\
5  & 4 & 0.957439 & 0.9996385 & 0.9892078 \\
6  & 4 & 0.967703 & 0.9993895 & 0.9810882 \\
7  & 4 & 0.974642 & 0.9990040 & 0.9793058 \\
8  & 4 & 0.979555 & 0.9982222 & 0.9909239 \\
9  & 4 & 0.983163 & 0.9974163 & 0.9886137 \\
10 & 4 & 0.985891 & 0.9969899 & 0.9862372 \\
\hline
\end{tabular}
\end{table}

We begin with the non-multiplexed case, \(m=1\), and ask whether a diameter-apart end-to-end connection can simultaneously preserve both the elementary-link Werner parameter and the elementary-link rate. Table~\ref{tab:bcube_anyw_nontrivialp_k10} reports, for each feasible network size, the minimum link-level Werner parameter \(w\) and the minimum link-level entanglement generation success probability \(p\) for which diameter-apart host-host connections satisfy both conditions. A first key observation is that feasibility begins at \(k=4\), which identifies B-Cube\((n,k)\) with \(k\geq 4\) as the first nontrivial regime in which unbounded-scale hop-independent fidelity is not ruled out by the probabilistic nature of purification. A second key observation is that the minimum required purification size remains \(r=4\) for all feasible cases up to \(k=10\), while the minimum required link-level Werner parameter decreases as the network size increases.

This monotonic decrease in the required \(\min w\) is a direct consequence of the interaction between topology and purification. For diameter-apart pairs in B-Cube, increasing \(k\) increases the number of available edge-disjoint shortest paths, up to \(k+1\), and these parallel end-to-end copies are precisely the resources consumed by \(r\)-to-\(1\) purification. Although the raw end-to-end fidelity decreases with path length, restoring the link-level Werner parameter still requires only a fixed \(r=4\) purification, and the additional edge-disjoint paths in larger networks supply more purification inputs, so the end-to-end Werner condition can be sustained even at smaller link-level Werner parameters. However, the requirement imposed in Table~\ref{tab:bcube_anyw_nontrivialp_k10} is stronger than Werner restoration alone: the end-to-end rate must also match the link-level rate. This is why the reported minimum \(w\) values remain substantially higher than the Werner threshold \(w_{\mathrm{th}}\), which captures only the fidelity condition imposed by network diameter through purification.

\begin{table*}[t]
\centering
\caption{For each feasible $k\le 10$, the minimum purification size is $r=4$. The table shows the threshold $w_{\mathrm{th}}$, the minimum multiplexing level $m$ for which the yield criterion is feasible, together with the corresponding minimum $w$ and minimum $p$ at that minimum $m$. It also shows the minimum $w$ and minimum $p$ for the fixed maximum multiplexing level $m=5$.}
\label{tab:bcube_yield_minm_maxm_ieee}
\begin{tabular}{c c c c c c c}
\hline
$k$ & $w_{\mathrm{th}}$ & $\min m_Y$ & $\min w_Y$ & $\min p_Y$ & $w_Y|_{m=5}$ & $p_Y|_{m=5}$ \\
\hline
3  & 0.9137430540 & -- & --           & --           & --           & --           \\
4  & 0.9413085420 & 2  & 0.9996666660 & 0.9913320000 & 0.9882858721 & 0.9964759990 \\
5  & 0.9574399513 & 2  & 0.9770325457 & 0.9998599990 & 0.9811885506 & 0.9944519996 \\
6  & 0.9677033094 & 2  & 0.9802219856 & 0.9929879997 & 0.9770923165 & 0.9925479997 \\
7  & 0.9746423679 & 2  & 0.9746423679 & 0.9954359995 & 0.9746423679 & 0.9900000000 \\
8  & 0.9795557016 & 2  & 0.9795557016 & 0.9785114047 & 0.9795557016 & 0.9709843939 \\
9  & 0.9831634048 & 2  & 0.9831634048 & 0.9654381755 & 0.9831634048 & 0.9583073232 \\
10 & 0.9858912905 & 2  & 0.9858912905 & 0.9567226894 & 0.9858912905 & 0.9499879956 \\
\hline
\end{tabular}
\end{table*}

The corresponding behavior of the minimum required \(p\) reflects two competing effects. On one hand, increasing \(k\) improves the probability of obtaining enough redundant end-to-end copies for purification, because more edge-disjoint shortest paths become available. On the other hand, increasing \(k\) also lowers the quality of the purification inputs, since the states entering purification are distributed over longer paths, and it further reduces the success probability of the purification step itself. These competing effects explain why the minimum required \(p\) initially decreases with increasing \(k\), reflecting the growing topological redundancy, and then increases again after approximately \(k=7\), where the reduced input quality begins to dominate. Thus, even in the non-multiplexed setting, the feasible operating region is governed by a balance between topology-enabled redundancy and the increasing difficulty of purifying lower-quality long-distance states.

We next ask whether increasing the link-level resource population through parallel attempts can relax resource requirements. Operationally, this requires multiple memories to store successful entanglement and Bell-state-measurement stations capable of supporting multiple attempts in parallel. In the multiplexed setting, the stronger end-to-end conditions become \(w_{e2e}\geq w\) and \(R_{e2e}^{(m)} \geq p^{\mathrm{yield}}\), where \(p^{\mathrm{yield}} = m\,p\) denotes the target yield inherited from the elementary-link layer, since each logical connection has access to \(m\) parallel edge-disjoint entanglement generation attempts. Correspondingly, we define the end-to-end yield rate as

\begin{equation}
\begin{aligned}
R_{\mathrm{e2e}}^{(m)}
&= \mathbf{1}_{\mathcal{F}}
\sum_{n=r}^{mD}
\Pr[N=n]\,
\left\lfloor \frac{n}{r} \right\rfloor
p_{\mathrm{suc}},\\
\mathcal{F}
&=\left\{
w_{\mathrm{e2e}}^{(r)}\!\left(w^{D}\right)\geq w
\right\},\\
N&\sim \mathrm{Binomial}\!\left(mD,p^{D}\right),
\qquad D=k+1 .
\end{aligned}
\end{equation}
where \(N\) is the number of available input copies for purification, \(r\) is the purification block size, and \(p_{\mathrm{suc}}\) is the success probability of one \(r\)-to-\(1\) purification attempt; Eq.~\eqref{eq:4-to-1} gives this probability for the \(4\)-to-\(1\) case as an example. This definition explicitly accounts for the fact that the elementary links are edge-disjoint entanglement resources, and therefore the corresponding end-to-end requirement must be expressed in terms of end-to-end yield rather than only single-pair success.

The required link-level resources for the multiplexed case are summarized in Table~\ref{tab:bcube_yield_minm_maxm_ieee}. Since the rate constraint also depends on the Werner parameter, for each network setting we first identify the minimum non-trivial link-level Werner parameter for which the end-to-end constraint is satisfied. Then, at that minimum feasible answer, we search for the minimum link-level entanglement generation rate that satisfies both the Werner and the rate constraints.

Taken together, Tables~\ref{tab:bcube_anyw_nontrivialp_k10} and~\ref{tab:bcube_yield_minm_maxm_ieee} show that the resource requirements of unbounded-scale hop-independent fidelity in B-Cube is governed by a fundamental tradeoff between topology-enabled redundancy and purification difficulty over longer paths. As \(k\) increases, a diameter-apart pair gains access to up to \(k+1\) edge-disjoint shortest paths, which improves the ability to generate enough number of copies for multiple parallel end-to-end purification attempt. At the same time, increasing \(k\) also increases the end-to-end path length, which degrades the quality of the raw states entering purification and raises the purification threshold \(w_{\mathrm{th}}(k)\). This tradeoff explains the different trends observed in the minimum required link-level Werner parameter.

In the non-multiplexed case \((m=1)\), the system remains rate-limited throughout the scanned range, as indicated by the fact that the reported \(\min w\) values stay well above \(w_{\mathrm{th}}(k)\). Consequently, the dominant effect of increasing \(k\) is the growth in path redundancy, and \(\min w\) decreases monotonically. In the multiplexed case, however, the additional parallelism relaxes the rate bottleneck much earlier. As a result, \(\min w\) initially decreases with \(k\), but beyond a crossover point the rate constraint ceases to dominate and the active lower bound becomes \(w_{\mathrm{th}}(k)\), which increases with \(k\). This explains why the resource requirement in multiplexed cases first decreases and then increases, whereas the non-multiplexed curve only decreases.

The behavior of the minimum required link-level success probability \(\min p\) is governed by the same competition, but with end-to-end yield as the relevant rate quantity. Increasing \(k\) improves the copy-availability term by providing more candidate shortest paths, but it worsens the purification-success term because the raw end-to-end states are distributed over longer paths and, in the feasible region, are associated with smaller Werner parameters. For the present \(r=4\) purification setting, the key structural quantity is \(\lfloor (k+1)/4 \rfloor\), which gives the number of full \(4\)-to-\(1\) purification groups that can be formed. In particular, \(k=7\) is the first point at which two full purification groups become available, which explains the continued decrease in \(\min p\) up to that point. By contrast, increasing from \(k=7\) to \(k=8\) does not create an additional full purification group; it only introduces one spare path while the purification inputs become worse, so \(\min p\) increases. For \(k=9\) and \(k=10\), the growing number of spare paths improves the probability of successfully supplying both purification groups despite probabilistic copy generation, causing \(\min p\) to decrease again. Thus, the observed trends in both \(\min w\) and \(\min p\) reflect the same underlying balance between increasing topological redundancy and the growing challenge of purifying lower-quality long-distance end-to-end states.

Ultimately, what gives these results their weight is not the size of the required resources, but what those resources enable. Hop-independent fidelity decouples the scale of a quantum data center from the quality of its individual QPUs. Because the isometric B-Cube ((n,k)) construction keeps the number of swaps on any connection bounded and provides end-to-end redundancy that grows with distance exactly where it is needed, the link-level resource quality that the network must sustain does not increase as the network grows. From the application’s perspective, the number of swaps between any two QPUs remains bounded, regardless of where those QPUs are physically located. This means that computation can be placed anywhere in the data center without incurring a distance-dependent fidelity penalty. The network can then be expanded by adding more nodes of the same quality, rather than by requiring better nodes. This reflects the server-centric principle of classical data center networking: scaling out with commodity components rather than scaling up with increasingly powerful ones, now carried into the much harder setting of entanglement distribution.

The benchmark is admittedly high. The required fidelity is demanding but still within plausible reach of current sources, while the per-attempt success probability, near (0.98) even at (m=5), is the harder requirement and, under deterministic swapping with probabilistic generation, the binding constraint. The decisive point, however, is that this requirement is fixed. It is a threshold that must be cleared once, not a slope that becomes steeper each time more nodes are added. Clearing this threshold turns the open-ended problem of preserving entanglement quality over growing distance into a single, scale-independent requirement on the elementary link, allowing the quantum data center to grow without bound and without continual hardware upgrades.

\section{Discussion}
\label{sec:discussion}

In this work, we focused on the BCube topology because its path redundancy grows with Hamming distance: the connections that experience the largest raw fidelity degradation also provide the largest number of edge-disjoint shortest paths. This structure is well matched to purification-based recovery of link-level Werner quality. We therefore first determine the minimum value of $r$ for which a single $r$-to-1 purification block restores the Werner parameter to the elementary-link target. When additional raw copies are available, they are used to run multiple such $r$-to-1 purification blocks in parallel, rather than being distributed across deeper recursive purification layers, since purification becomes less likely to succeed as the input Werner parameter decreases. These choices, however, leave open broader design questions. Other topology families, adaptive purification schedules, alternative resource-allocation rules, or higher-order multi-copy protocols may achieve the same hop-independent fidelity target with lower overhead. More generally, our results suggest that feasibility is shaped jointly by topology, purification, and available link-level resources. Identifying designs that realize this feasible region with minimal overhead is therefore an important direction for future work.

Beyond these design choices, the present results rest on several modeling assumptions, and relaxing each one defines a natural extension of the analysis. The first concerns the type of resource considered. Our analysis treats the network primarily as a system of elementary bipartite entanglement resources, whereas a more general quantum network may have both bipartite and multipartite entanglement available. Moreover, resources that are not selected during routing and swapping are discarded in our model. A more complete architecture could instead use quantum memories to store and reuse unused entangled states for later routing, purification, or higher-level connection requests.

A second assumption concerns device behavior. Our analysis assumes probabilistic elementary entanglement generation but deterministic entanglement swapping. This assumption is useful because it isolates the role of topology and path redundancy in enabling hop-independent Werner-parameter recovery. However, realistic quantum networks will involve imperfect swapping, heterogeneous elementary-link qualities, and memory decoherence. Extending the model to include these effects would test whether the proposed resource-scaling behavior remains robust under device-level constraints. In particular, it would show whether stronger topological redundancy can compensate not only for hop-induced Werner-parameter degradation, but also for additional probabilistic losses from realistic network operations.

Finally, a third assumption concerns the order of operations: our protocol follows a swap-then-purify order at the module level. This choice does not exclude link-level purification, since any purification performed before elementary-link generation can be absorbed into the elementary-link Werner parameter $w$ used as the input to our model. We focus on swap-then-purify because it supports a shorter time-step structure: raw end-to-end copies are first generated across the module, and purification is then applied only at the end points. In contrast, a purify-then-swap strategy, or an alternating sequence of purification and swapping, would introduce additional synchronization rounds and increase the number of time steps required to establish a module-level resource. Comparing these timing and resource tradeoffs is an important direction for future work.

\section{Conclusion}
\label{sec:conclusion}

This paper studied whether the scalability advantages of server-centric quantum data-center networks can be made compatible with fidelity preservation. In such architectures, increasing the network size typically increases the number of hops and Bell-state measurements between distant QPUs, which in turn lowers the raw end-to-end fidelity. We focused on a stronger target than simply meeting an application-specific fidelity threshold: hop-independent fidelity, where a multi-hop end-to-end state is restored to the quality of a successfully generated elementary link. Achieving this target allows the network to hide its internal path length and expose each QPU pair as an effective link of uniform quality.

We approached this question in three steps. First, we used a topology-independent black-box path model to determine how many raw end-to-end copies are needed to recover the elementary-link Werner parameter over a distance-$\ell$ connection. This analysis showed that optimized nested (r)-to-1 purification protocols can substantially reduce the required operating region compared with recursive 2-to-1 purification. Second, we used these requirements to motivate the HI-QDC architecture: a recursive, isometric, modular construction in which each ($\mathrm{BCube}(n,k)$) module converts its internal multi-hop entanglement into an effective link-level resource for an outer module of the same structural form. Because each recursive level has the same topology and the same statistical resource description, the purification condition verified for one module can be reused at higher levels under the same operating assumptions.

The topology-aware analysis then identified the elementary-link resources required for a BCube module to act as such an effective link. Recovering the Werner parameter alone becomes feasible without multiplexing starting at ($\mathrm{BCube}(n,3)$), while preserving both elementary-link fidelity and elementary-link yield is first achieved at (k=4). Across the feasible network sizes studied up to (k=10), this stronger condition is met using a fixed purification size of (r=4). The resulting resource picture is not one in which fidelity, success probability, and multiplexing can freely substitute for one another. The elementary-link Werner parameter sets the feasibility floor, multiplexing provides the most direct way to increase the number of available copies while it remains modest, and the elementary-link success probability controls whether the purified end-to-end yield can match the link-level benchmark. Most importantly, these requirements appear as fixed thresholds to be satisfied by the elementary links, not as demands that grow with every additional QPU.

This fixed-threshold behavior is the main implication of hop-independent fidelity. Once the elementary links operate inside the feasible region, the quality required from individual QPUs and elementary connections does not need to improve simply because the data center grows. The isometric BCube-based construction keeps the relevant hop count bounded within each module, while its path diversity supplies additional end-to-end copies exactly for the longer connections that suffer greater raw fidelity loss. As a result, computation can be placed across the data center without introducing a distance-dependent fidelity penalty, and the network can grow by adding more QPUs of the same quality rather than by requiring increasingly better ones. This brings the server-centric scaling principle of classical data centers into the quantum setting, where the central challenge is not only connectivity but also preservation of entanglement quality.

These conclusions hold under the assumptions used throughout the analysis: probabilistic elementary-link generation, deterministic swapping, homogeneous elementary-link resources, and purification applied after end-to-end copies have been generated. Relaxing these assumptions is a natural next step. Imperfect swapping, heterogeneous link qualities, memory decoherence, and alternative purification schedules will determine how robust the feasible region remains under more detailed device-level constraints.

The central lesson is that the same growth that makes server-centric networks vulnerable to hop-induced fidelity loss can also provide the redundancy needed to repair that loss. In this sense, topology supplies the parallel end-to-end resources required by purification, and purification turns those resources into hop-independent entanglement quality.

\section*{Acknowledgment}

KPS thanks the U.S. Department of Energy, Office of Science, Advanced Scientific Computing Research (ASCR) program, for support under Award Number DE-SC0026264, and PQI Community Collaboration Awards. The authors also thank Inder Monga, Reza Nejabati and Rodney Van Meter for helpful discussions.

\bibliographystyle{IEEEtran}
\bibliography{qce_intro_core}

\appendix
\label{sec:appendix}
This appendix collects the formal results referenced in Section~\ref{sec:Architecture}: the equality between the number of edge-disjoint shortest paths and the path length, the consequent homogeneity of fidelity profiles, and the station-count comparison underlying the resource benefit. Throughout, a $\mathrm{BCube}(n,k)$ addresses its QPUs by $(k{+}1)$-tuples $x=(x_1,\dots,x_{k+1})$ over $\{0,\dots,n{-}1\}$. Two QPUs share their layer-$i$ switch if and only if their addresses agree in every coordinate except the $i$-th, so a single hop through a layer-$i$ switch changes only coordinate $i$. For QPUs $u$ and $v$, we write $\Pi=\{\,i:u_i\neq v_i\,\}$ for the set of differing coordinates and $d=|\Pi|$ for their Hamming distance.

We first record that the shortest path realizes this Hamming distance.

\begin{lemma}
\label{lem:length}
The shortest path between $u$ and $v$ has length $d$, and each shortest path corresponds to an ordering of the $d$ coordinates in $\Pi$.
\end{lemma}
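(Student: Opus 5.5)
The argument views a path in the QPU-level graph (two QPUs adjacent iff they share a switch, i.e., addresses differ in exactly one coordinate) as a sequence of single-coordinate changes. It has two halves: a lower bound via a Hamming-distance potential, and an upper bound by explicit construction. The characterization of shortest paths then falls out of the tightness of the lower bound.

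For the lower bound, consider the potential $h(x)=|\{i: x_i\neq v_i\}|$. By the adjacency rule stated above, one hop through a layer-$i$ switch changes only coordinate $i$. It therefore changes $h$ by at most one: it either fixes coordinate $i$ to $v_i$, breaks it, or leaves its match status unchanged. Since $h(u)=d$ and $h(v)=0$, any $u$--$v$ path has length at least $d$.

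For the upper bound, fix any ordering $(\pi_1,\dots,\pi_d)$ of $\Pi$. Starting from $u$, at step $j$ hop through the layer-$\pi_j$ switch, setting coordinate $\pi_j$ to $v_{\pi_j}$. Each step changes exactly one coordinate, so it is a legal hop. After $d$ steps every differing coordinate is corrected, and we reach $v$. Hence the shortest-path length is exactly $d$, and every ordering of $\Pi$ yields a shortest path.

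For the converse, take a path of length exactly $d$. Each hop must decrease $h$ by exactly one, since otherwise the potential count leaves fewer than $d$ decrements available. A hop decreases $h$ by one only if it changes a coordinate $i$ with $x_i\neq v_i$ and sets it to $v_i$. So each hop corrects one coordinate of $\Pi$ to its final value and never touches coordinates outside $\Pi$, nor already-corrected ones. Consequently each coordinate of $\Pi$ is changed exactly once, and the sequence of switch layers used is a permutation of $\Pi$. Moreover the path is uniquely determined by this permutation, because the value written at each step is forced to be $v_i$. This gives a bijection between shortest paths and orderings of $\Pi$.

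The only delicate point is the converse step. It requires ruling out paths that modify a coordinate to an intermediate value and later fix it. The exact-decrement observation handles this, since such a detour would spend a hop that fails to decrease $h$.
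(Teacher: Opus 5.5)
Your proof is correct and follows essentially the same route as the paper: each hop changes only one coordinate, so at least $d$ hops are needed, and correcting the coordinates of $\Pi$ in any order attains $d$. Your exact-decrement argument with the potential $h$ makes explicit a step the paper's one-line proof leaves implicit: every length-$d$ path changes each coordinate of $\Pi$ exactly once and no other coordinate.
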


\begin{IEEEproof}
Each hop alters exactly one coordinate, so reaching $v$ from $u$ requires correcting each of the $d$ coordinates in $\Pi$ at least once. Hence every path has length at least $d$. Correcting each differing coordinate exactly once yields a path of length exactly $d$, so the shortest-path length is $d$. Such a path is determined by the order in which the coordinates of $\Pi$ are corrected.
\end{IEEEproof}

\begin{theorem}
\label{thm:disjoint}
In $\mathrm{BCube}(n,k)$, let $u$ and $v$ be two QPUs at Hamming distance
$d$, and let
\[
\Pi=\{\,i:u_i\neq v_i\,\}.
\]
The maximum number of pairwise edge-disjoint shortest paths between $u$ and
$v$ is exactly $d$.
\end{theorem}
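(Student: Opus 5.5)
The argument has two halves: an upper bound from the degree of $u$ restricted to shortest paths, and a matching lower bound from an explicit cyclic construction. Both rest on Lemma~\ref{lem:length}, which says every shortest path from $u$ to $v$ corresponds to an ordering of the coordinates in $\Pi$, with exactly one hop per coordinate.

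For the upper bound, consider the first hop of any shortest path. It must change a coordinate $i\in\Pi$, and it must set that coordinate to $v_i$; otherwise that coordinate would need correcting twice, giving a path longer than $d$. So the first edge of a shortest path leaves $u$ through one of only $d$ possible edges, namely $u\to u^{(i)}$ for $i\in\Pi$, where $u^{(i)}$ denotes $u$ with coordinate $i$ replaced by $v_i$. In the switch-based model the edge is really the QPU--switch link at layer $i$, but the conclusion is the same: there are at most $d$ distinct first links. Pairwise edge-disjoint paths must use distinct first edges, so there are at most $d$ of them.

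For the lower bound, I would enumerate $\Pi=\{i_0,\dots,i_{d-1}\}$ and, for $j=0,\dots,d-1$, define path $P_j$ to correct the coordinates in the cyclic order $i_j,i_{j+1},\dots,i_{j+d-1}$, with indices taken mod $d$. Each $P_j$ has length $d$. The intermediate node of $P_j$ after $t$ hops equals $v$ on the coordinates $\{i_j,\dots,i_{j+t-1}\}$ and equals $u$ elsewhere. A hop of $P_j$ is therefore identified by three pieces of data: the set $S$ of coordinates already corrected, the layer $i$ being corrected, and the resulting set $S\cup\{i\}$. The set $S$ is a cyclic interval of $\Pi$ starting at $i_j$, and $i$ is the next element after that interval.

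The main step is to show that no edge is shared by two paths. Suppose $P_j$ and $P_{j'}$ share an edge. Then they have the same pre-hop set $S$ and correct the same coordinate $i$ on that hop. If $S$ is nonempty, it is a cyclic interval of length $t<d$ starting at $i_j$ and also one starting at $i_{j'}$. Because $t<d$, its starting point is determined by the interval itself, so $j=j'$. If $S$ is empty, then $i=i_j=i_{j'}$, so again $j=j'$.

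Finally, one must check that in the switch model the edges are QPU--switch links rather than direct QPU--QPU edges. I expect this to be the main subtle point. The layer-$i$ switch used on the hop from $x$ is determined by $x$ with coordinate $i$ removed. Two hops therefore share a link only if they share both the switch and an endpoint QPU, which again forces the same $S$ and the same $i$, and the argument above applies.
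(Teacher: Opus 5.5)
Your proof is correct and takes essentially the same approach as the paper: the upper bound comes from counting the $d$ possible first links out of $u$, the lower bound uses the same cyclic-rotation paths, and both rest on the fact that a nonempty proper cyclic arc of $\Pi$ determines its starting point. The only difference is bookkeeping. The paper shows the paths share no intermediate QPU and have distinct first and last edges, whereas you rule out a shared hop directly via the pair (pre-hop set, corrected coordinate). The step you leave implicit is that a shared QPU--switch link at layer $i$ forces equal pre-hop sets; it holds because whether $i$ lies in the shared QPU's corrected set fixes whether that QPU is the source or the target of the hop.
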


\begin{IEEEproof}
By Lemma~\ref{lem:length}, every shortest path from $u$ to $v$ has length
$d$ and corrects exactly the $d$ coordinates in $\Pi$, each exactly once.
Therefore, the first hop of any shortest $u$--$v$ path must correct one
coordinate $i\in\Pi$.

In $\mathrm{BCube}(n,k)$, each QPU has exactly one physical edge to its
layer-$i$ switch. Hence two shortest paths whose first corrected coordinate
is the same $i\in\Pi$ must both use the same physical edge incident to $u$.
Such paths cannot be edge-disjoint. Since there are only $d$ choices for the
first corrected coordinate, any family of pairwise edge-disjoint shortest
paths has size at most $d$.

It remains to show that this upper bound is attainable. Arrange the differing
coordinates in a fixed cyclic order
\[
\pi_0,\pi_1,\dots,\pi_{d-1}.
\]
For each $j\in\{0,\dots,d-1\}$, define $P_j$ to be the shortest path that
corrects the coordinates in the cyclically rotated order
\[
\pi_j,\pi_{(j+1)\bmod d},\dots,\pi_{(j+d-1)\bmod d}.
\]
Each $P_j$ has length $d$ by Lemma~\ref{lem:length}.

We now show that the paths $P_0,\dots,P_{d-1}$ are edge-disjoint. Let
$x_{j,t}$ denote the QPU reached after the first $t$ hops of $P_j$, where
$1\le t\le d-1$. Then $x_{j,t}$ agrees with $v$ on the corrected set
\[
S_{j,t}
=
\{\pi_j,\pi_{(j+1)\bmod d},\dots,\pi_{(j+t-1)\bmod d}\}
\]
and agrees with $u$ on all remaining coordinates. Since $u_i\neq v_i$ for
all $i\in\Pi$, the corrected set $S_{j,t}$ uniquely determines the
intermediate QPU $x_{j,t}$.

Suppose two paths $P_j$ and $P_{j'}$ meet at an intermediate QPU. Then for
some $1\le t,t'\le d-1$ we have
\[
x_{j,t}=x_{j',t'}.
\]
Therefore,
\[
S_{j,t}=S_{j',t'}.
\]
Both sets are nonempty proper contiguous arcs in the cyclic order
$\pi_0,\dots,\pi_{d-1}$. Equality of these arcs implies first that
$t=t'$, since the two sets have the same cardinality. Because the arcs are
proper, the starting point of each arc is the unique element whose cyclic
predecessor is not contained in the arc. Hence the starting indices must also
agree, so $j=j'$. Therefore, no two distinct paths share an intermediate QPU.

Finally, the first edges of the paths are distinct because $P_j$ leaves $u$
through the layer-$\pi_j$ switch, and the coordinates $\pi_j$ are distinct as
$j$ varies. Similarly, the last edges are distinct because $P_j$ enters $v$
through the layer-$\pi_{(j-1)\bmod d}$ switch, which is also distinct as
$j$ varies. Since the paths share no intermediate QPU and use distinct edges
incident to $u$ and $v$, they are pairwise edge-disjoint.

Thus there exist $d$ pairwise edge-disjoint shortest paths, and no family of
pairwise edge-disjoint shortest paths can contain more than $d$ paths. The
maximum number is therefore exactly $d$.
\end{IEEEproof}

\begin{remark}
The word ``shortest'' is essential in Theorem~\ref{thm:disjoint}. For
non-shortest paths, a path may first change a coordinate on which $u$ and
$v$ already agree and later change it back. Therefore, when $d<k+1$, the
maximum number of edge-disjoint $u$--$v$ paths of arbitrary length can exceed
$d$.
\end{remark}

The cyclic structure of these paths also fixes their fidelity. Let $\Phi(P)=\prod_{e\in P}\phi(e)$ denote the fidelity of a path, where $\phi(e)$ is the fidelity of edge $e$.

\begin{theorem}
\label{thm:fidelity}
If every link produced by a station in layer $i$ has the same fidelity $\phi_i$, then each of the edge-disjoint shortest paths $P_0,\dots,P_{d-1}$ of Theorem~\ref{thm:disjoint} satisfies
\[
\Phi(P_j)=\prod_{i\in\Pi}\phi_i,
\]
independently of $j$. In particular, all $d$ disjoint paths share the same end-to-end fidelity, even when the per-layer fidelities $\{\phi_i\}_{i\in\Pi}$ differ.
\end{theorem}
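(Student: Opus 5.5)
The argument only needs the structure of the paths $P_j$ built in the proof of Theorem~\ref{thm:disjoint}; the rest is commutativity. I will show that the multiset of layers traversed by $P_j$ is exactly $\Pi$, with each layer used once. Then $\Phi(P_j)$ collapses to a product that does not depend on $j$.

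First I would pin down the layer of each edge of $P_j$. By construction, $P_j$ corrects the coordinates in the order $\pi_j,\pi_{(j+1)\bmod d},\dots,\pi_{(j+d-1)\bmod d}$. Its $t$-th hop ($1\le t\le d$) goes from $x_{j,t-1}$ to $x_{j,t}$, with the conventions $x_{j,0}=u$ and $x_{j,d}=v$. This hop changes only coordinate $\pi_{(j+t-1)\bmod d}$. By the BCube adjacency rule, two QPUs that differ in exactly coordinate $i$ are joined through their shared layer-$i$ switch. So this edge is a link produced by a layer-$\pi_{(j+t-1)\bmod d}$ station, and by hypothesis its fidelity is $\phi_{\pi_{(j+t-1)\bmod d}}$.

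Next I would reindex the product:
\[
\Phi(P_j)=\prod_{t=1}^{d}\phi_{\pi_{(j+t-1)\bmod d}}.
\]
As $t$ runs over $1,\dots,d$, the index $(j+t-1)\bmod d$ runs bijectively over $\{0,\dots,d-1\}$, since it is a cyclic shift. So the factors are exactly $\phi_{\pi_0},\dots,\phi_{\pi_{d-1}}$, each appearing once. Multiplication of real numbers is commutative, so the product equals $\prod_{i\in\Pi}\phi_i$ regardless of $j$. This also covers the case where the $\phi_i$ differ across layers: only the set of layers matters, not their order.

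There is no real obstacle; the only point needing care is the bookkeeping that identifies each edge with its layer. One subtlety is that in BCube a "link" runs QPU--switch--QPU, i.e.\ through a BSM station. I would state explicitly that the fidelity $\phi(e)$ is assigned per QPU-to-QPU link generated by a layer-$i$ station, consistent with the paper's model, so that each hop contributes exactly one factor. I would also remark that the argument never uses edge-disjointness. The same equality therefore holds for every shortest $u$--$v$ path, since by Lemma~\ref{lem:length} each one corrects every coordinate of $\Pi$ exactly once.
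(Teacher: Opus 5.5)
Your proof is correct and follows essentially the same route as the paper: each $P_j$ corrects every coordinate of $\Pi$ exactly once, so it uses exactly one link per layer in $\Pi$, and commutativity of multiplication makes $\Phi(P_j)$ independent of $j$. Your explicit cyclic-reindexing bookkeeping is a more detailed version of the paper's argument. Your observation that edge-disjointness is never used is also accurate: by Lemma~\ref{lem:length}, the identity holds for every shortest $u$--$v$ path.
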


\begin{IEEEproof}
By construction, each $P_j$ corrects every coordinate of $\Pi$ exactly once. Hence it traverses exactly one link at each layer $i\in\Pi$ and none elsewhere. Under the layer-uniform assumption, that link has fidelity $\phi_i$ regardless of which station originates it. Therefore,
\[
\Phi(P_j)=\prod_{i\in\Pi}\phi_i,
\]
where the product is taken over the layers in $\Pi$ visited in some order. Because multiplication is commutative, the product depends only on the multiset $\{\phi_i\}_{i\in\Pi}$ and not on the order, so it is identical for every $j$.
\end{IEEEproof}

It remains to count network resources, taking BSM stations as the resource and treating the multiplexers as cost-free. A single $\mathrm{BCube}(n,k)$ has $k{+}1$ layers, each containing $n^{k}$ stations, one per choice of the $k$ address coordinates other than the layer's own. It therefore contains $(k{+}1)n^{k}$ stations while hosting $n^{k+1}$ QPUs.

Hosting $N=n^{2k+1}=n^{(2k)+1}$ QPUs by classical scaling requires a single $\mathrm{BCube}(n,2k)$, with
\begin{equation}
\label{eq:flat}
  S_{\mathrm{flat}} = (2k+1)n^{2k}
\end{equation}
stations. HI-QDC instead uses $n^{k}$ inner $\mathrm{BCube}(n,k)$ modules, contributing
\[
n^{k}(k{+}1)n^{k}=(k{+}1)n^{2k}
\]
stations, together with one outer $\mathrm{BCube}(n,k)$ over the $n^{k+1}$ multiplexed ports. Since the first layer of the outer structure is supplied by the inner connectivity, only $k$ of its $k{+}1$ layers are realized, contributing $k n^{k}$ stations. Hence
\begin{equation}
\label{eq:hi}
  S_{\mathrm{HI}} = (k+1)n^{2k} + k n^{k}.
\end{equation}
Dividing \eqref{eq:hi} by \eqref{eq:flat} gives
\begin{equation}
\label{eq:ratio}
  \frac{S_{\mathrm{HI}}}{S_{\mathrm{flat}}}
  =
  \frac{k+1}{2k+1}
  +
  \frac{k}{(2k+1)n^{k}}
  \xrightarrow[k\to\infty]{}
  \frac12 .
\end{equation}
For $n=2$ and $k=2$, this gives $S_{\mathrm{flat}}=5\cdot2^{4}=80$ and $S_{\mathrm{HI}}=3\cdot2^{4}+2\cdot2^{2}=56$, corresponding to a $30\%$ reduction, alongside $n^{k}=4$ cost-free multiplexers.

\end{document}